\newtheorem{theorem}{Theorem}
\newtheorem{definition}{Definition}
\newdefinition{rmk}{Remark}
\newproof{pf}{Proof}
\newtheorem{example}{Example}
\def\tsc#1{\csdef{#1}{\textsc{\lowercase{#1}}\xspace}}
\begin{document}
\bibliographystyle{unsrt}
\let\WriteBookmarks\relax
\def\floatpagepagefraction{1}
\def\textpagefraction{.001}
\shorttitle{Cryptanalysis  for a family of PNDCCs}
\shortauthors{Q. Wang and S. Yu}

\title [mode = title]{Cryptanalysis  for a family of plaintext-non-delayed chaotic ciphers}                      
\tnotemark[1]

\tnotetext[1]{This document is the results of the research
   project funded by the National Natural Science Foundation of China (No. 62271157).}

   \author[1]{Qianxue Wang}[orcid=0000-0002-4063-5401]
\ead{wangqianxue@gdut.edu.cn}
\ead[url]{https://teacher.gdut.edu.cn/wangqianxue/en/index.htm}

\author[1]{Simin Yu}[orcid=0000-0002-1192-6955]
\cormark[1]
\ead{siminyu@163.com}
\address[1]{School of Automation, Guangdong University of Technology, Guangzhou 510006, China}

\cortext[cor1]{Corresponding author}

\begin{abstract}
  Plaintext-non-delayed chaotic cipher (PNDCC) is defined such that, in the diffusion equation, plaintext has no delay terms, while the ciphertext has a feedback term. Its general form can be expressed as  $C(k)={{f}_{D}}\left( P(k),C(k-1),K(k) \right)$, where ${{f}_{D}}$  is the iteration function. In the existing literature, the diffusion functions of chaotic ciphers adopt this form without exception. Since the introduction of PNDCC, its security performance has attracted wide attention but also raised serious doubts. The main reason lies in the fact that designers of chaotic ciphers generally demonstrate the ``security'' of PNDCC through the fulfillment of various statistical criteria, while rigorous security proofs in the sense of modern cryptography are absent. Consequently, there is a good reason to critically re-examine and question both the design motivation and the empirical security of PNDCC. To address this issue, we present a typical example of a three-stage permutation-diffusion-permutation PNDCC, which contains multiple security vulnerabilities. Although all of its statistical indicators show good performance, we are able to break it using four different attacks. The first is a differential attack based on homogeneous operations; the second is an S-PTC attack; the third is a novel impulse-step-based differential attack (ISBDA), proposed in this paper, and the fourth is a novel chain attack, also introduced here. These results demonstrate that the fulfillment of statistical criteria is not a sufficient condition for the security of PNDCC. Then, based on a mathematical model of multi-stage PNDCC, we show that the proposed chain attack can successfully break a class of multi-stage PNDCCs. The key technique of the chain attack depends on how to reveal all permutations. To address this key problem, we summarize the chaining rules and show that, from the attacker's perspective, if the same decryption chain can be reconstructed then all permutations can be deciphered. To that end, the entire diffusion process can be broken by solving a system of simultaneous equations. Finally, as a secure improvement, we propose a new scheme termed plaintext-delayed chaotic cipher (PDCC) that can resist various cryptanalytic attacks. Several case studies, supported by both theoretical analysis and numerical experiments, and also together with the directly runnable MATLAB program provided by this paper, confirm the reliability and effectiveness of the proposed method.
\end{abstract}

\begin{highlights}
  \item Propose a chain attack to recover all permutations of multi‑stage PNDCC.

  \item Reveal that satisfying statistical indicators is not sufficient for chaos‑based cipher security.

  \item Introduce the plaintext‑delayed chaotic cipher (PDCC) to defeat the identified attacks.
\end{highlights}

\begin{keywords}
  Plaintext-non-delayed chaotic cipher (PNDCC) \sep Statistical indicators \sep Impulse-step-based differential attack (ISBDA) \sep Chain attack \sep Plaintext-delayed chaotic cipher (P\allowbreak D\allowbreak CC)
\end{keywords}
\maketitle

\section{Introduction}
In recent years, chaos has demonstrated potential application value in the field of dependable and secure computing, spanning network security, software security, internet-of-things security, cloud-computing security, and more~\cite{ref1,ref2,ref3}. With respect to multimedia security, applications of chaos mainly concern the analysis and design of chaos-based image encryption, a research direction that has attracted sustained and broad attention over a long period.

At present, the reason why a large number of chaotic ciphers have been successively broken~\cite{ref26,ref27,ref28,ref29,ref30,ref31,ref32} lies primarily in improper design, which leads to a series of evident security vulnerabilities exploitable by attackers. Fundamentally, this does not stem from any inherent defect of chaotic ciphers themselves, but rather from the fact that some designers, without a proper understanding of chaotic cryptanalysis, hastily embark on cipher design. For example, by analyzing the specific forms of the diffusion function $f_{D}$ in~\cite{ref13,ref14,ref15,ref16,ref17,ref18,ref19,ref20,ref21,ref22,ref23,ref24}, \cite{ref26} and \cite{ref27} observed that the diffusion functions in \cite{ref13,ref14,ref17,ref19} adopt heterogeneous operations, namely mixtures of modular and XOR operations, whereas those in~\cite{ref15,ref16,ref18,ref20,ref21,ref22,ref23,ref24} adopt homogeneous operations, including homogeneous modular or homogeneous XOR operations. Building on this, \cite{ref26} and \cite{ref27} were the first to propose a differential attack based on homogeneous operations to break chaotic ciphers whose diffusion functions are homogeneous. By analyzing the encryption algorithm in~\cite{ref12}, \cite{ref28} proposed employing multiple attack methods. First, \cite{ref28} found that the five encryption stages in the original algorithm reduced to a two-stage diffusion-permutation, and the resulting aggregate diffusion stage is homogeneous, enabling a homogeneous-operation-based differential attack to break the algorithm. Then, \cite{ref28} discovered that the equivalent aggregate diffusion exhibits a unidirectional diffusion property, and by using plaintext intersection to determine the positional correspondence between the plaintext set and the ciphertext set, an S-PTC attack was pioneered to successfully break the algorithm proposed in~\cite{ref12}.
In particular, even to this day, there are still a large number of reports that prove the security of the designed chaotic ciphers by meeting various statistical indicators, such as histogram, correlation, NPCR, UACI, and information entropy~\cite{ref33,ref34,ref35}.  However, our findings point out that even if these statistical indicators meet common thresholds, this does not necessarily mean that they are sufficient to reflect true security.  Indeed, algorithms with favorable histogram, NPCR, and entropy metrics have been completely broken by chosen-plaintext attacks~\cite{ref5,ref6}, while strong S-Box designs require eliminating fixed points and short period rings that statistical tests fail to detect~\cite{ref7}. Therefore, the current security evaluation methodologies for chaotic ciphers should be thoroughly rethought.

Although many chaotic ciphers have been successively broken~\cite{ref26,ref27,ref28,ref29,ref30,ref31,ref32}, revealing a variety of vulnerabilities such as homogeneous operations and one-way sensitivity and achieving several important advances, all the analytical results in~\cite{ref26,ref27,ref28,ref29,ref30,ref31,ref32} merely highlight individual cases of security hazards and flaws in current chaotic-cipher designs. These reflect only one aspect of the particularity of the security problem, without essentially uncovering a common issue that pervasively exists across current chaotic ciphers. This belongs to another aspect of the generally existing security problems of chaotic ciphers and calls for a fundamental solution: namely, why is it that for all current chaotic ciphers, despite various apparent security vulnerabilities, they can still readily pass all statistical tests such as histogram, correlation, NPCR, UACI, and information entropy~\cite{ref5,ref6,ref7,ref10,ref11,ref12,ref13,ref14,ref15,ref16,ref17,ref18,ref19,ref20,ref21,ref22,ref23,ref24,ref33,ref34,ref35}?

This paper conducts an in-depth study of the above issue and, for the first time, observes that the diffusion equations of all chaotic ciphers proposed in the existing literature contain no plaintext delay term $P(k-1)$. We refer to such schemes as plaintext-non-delayed chaotic ciphers (PNDCCs)~\cite{ref8,ref9,ref10,ref11,ref12,ref13,ref14,ref15,ref16,ref17,ref18,ref19,ref20,ref21,ref22,ref23,ref24}, whose diffusion equation generally takes the form
$C(k)={{f}_{D}}\left( P(k),C(k-1),K(k) \right)$,
where $f_{D}$ is the diffusion function. First, numerical simulations show that in all PNDCCs, the common essential characteristic is the absence of a plaintext delay term in the diffusion equation, while a ciphertext feedback term is present. The historical influence of the plaintext is indirectly introduced into the current diffusion through ciphertext feedback; the ciphertext or internal state depends not only on the current plaintext, but also—via previously generated ciphertext or state—indirectly on the entire plaintext history. In other words, even if the diffusion equation contains no explicit plaintext delay term, the influence of all historical plaintext is still implicitly preserved by ciphertext feedback. Therefore, even in the presence of many other security vulnerabilities, such schemes can still pass all statistical tests. In short, as long as there is a ciphertext feedback term, the statistical indicators will definitely meet the standards. To illustrate this point, we first present a three-stage permutation-diffusion-permutation PNDCC with multiple security vulnerabilities, and we break it using four attack methods, including ISBDA and the chain attack. This demonstrates that “meeting” statistical indicators is only a necessary condition, not a sufficient one, for the security of PNDCCs. Next, we establish a general model for multi-stage PNDCCs and derive the associated generalized iterative equation, showing that, from an attacker’s perspective, a class of PNDCCs can be successfully broken in one stroke via the proposed chain attack. The main feature of the chain attack is that it first recovers all permutation keys and then all diffusion keys; it is simple to carry out and has low attack complexity, offering advantages unmatched by conventional cryptanalytic methods. The core technique of the chain attack lies in how to first recover all positional permutations. To address this key problem, we summarize the chaining rules and, from the attacker’s perspective, reconstruct the same decryption chain formed during decryption, and then use it to decipher all permutations, thereby making it possible to first recover all positional permutations. We note that the chain attack and the S-PTC attack in \cite{ref28} share the common requirement of establishing positional correspondence between plaintext and ciphertext sets via a chosen-ciphertext attack. The difference is that in the S-PTC attack, once positional correspondence is found, one must write all possible values and build a codebook between ciphertext and plaintext. With such a codebook, an attacker can obtain the corresponding plaintext for any ciphertext by lookup. Consequently, compared with the chain attack, the S-PTC attack entails much higher complexity. For the chain attack, one only needs to solve the one-dimensional problem of pixel positions first, and then break the entire diffusion by solving a system of simultaneous equations; in contrast, the S-PTC attack must handle both pixel magnitude and pixel position—two dimensions. Finally, to fundamentally address the security vulnerabilities of current chaotic ciphers, we further propose, as an improved scheme, a plaintext-delayed chaotic cipher (PDCC) that can withstand various attacks on cryptographic algorithms.

The remainder of this paper is organized as follows. Sec.~\ref{sec:3level-pndcc-stat-isbda} introduces the statistical  indicators and the ISBDA cryptanalysis of the three-stage permutation-diffusion-permutation PNDCC. Sec.~\ref{sec:general-iter-eq} presents the generalized iterative equation and the associated position expression for PNDCCs. Sec.~\ref{sec:chained-attack} describes the chain attack against PNDCCs and its implementation. Sec.~\ref{sec:diffusion-attack} introduces the cryptanalysis of the diffusion stages after all permutation stages have been recovered. Sec.~\ref{sec:attack-example-pdcc} provides a worked example of the chain attack and an improved scheme, PDCC, for PNDCCs. Sec.~\ref{sec:conclusion} concludes the paper.

\section{Statistical Indicators of Three-Stage Permutation-Diffusion-Permutation PNDCC and  its ISBDA}
\label{sec:3level-pndcc-stat-isbda}
In the extensive literature on chaotic cipher design~\cite{ref10,ref11,ref12,ref13,ref14,ref15,ref16,ref17,ref18,ref19,ref20,ref21,ref22,ref23,ref24,ref33,ref34,ref35}, designers commonly substantiate the “security” of their schemes by meeting a set of statistical indicators. However, in this section, through the security analysis of a representative three-stage permutation-diffusion-permutation PNDCC, we show that even when these statistical indicators are satisfied, one cannot conclude that the cipher is secure. Meeting statistical indicators is only one of the necessary conditions for the security of chaotic ciphers; it is not sufficient.  This finding likewise applies to modern cryptography.

\subsection{Description of the Three-Stage Permutation-Diffusion-Permutation PNDCC}
A three-stage permutation-diffusion-permutation PNDCC is shown in Fig.~\ref{fig1}, where the diffusion equation is
\begin{equation}
    \label{eq1}
    C(k)=\bmod (P(k)-C(k-1)-K(k),256).
\end{equation}

\begin{figure}[htbp]
  \centering 
  \includegraphics[width=0.4\textwidth]{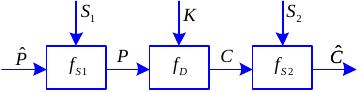} 
  \caption{Block diagram of a three-stage permutation-diffusion-permutation PNDCC with diffusion satisfying Eq.~\eqref{eq1}.} 
  \label{fig1} 
\end{figure}
Here, $k=1,2,\ldots,M\times N$, and $M\times N$ is the image size. It is evident that Eq.~\eqref{eq1} satisfies the general form of the diffusion equation $C(k)=f_{D}\left(P(k),C(k-1),K(k)\right)$. Based on the scheme in Fig.~\ref{fig1}, after two rounds of computation, the corresponding statistical analysis indicators are evaluated, including: (1) histograms of the plaintext and ciphertext; (2) horizontal, vertical, and diagonal correlations of the plaintext and ciphertext; (3) NPCR and UACI; and (4) the entropy of the plaintext and ciphertext.

\subsection{Computation of Statistical Indicators}
(1) Histogram analysis: The histograms of the plaintext and ciphertext are shown in Figs.~\ref{fig2}.  

\begin{figure}[!ht]
  \centering
  \begin{minipage}[b]{0.48\linewidth}
    \centering
    \includegraphics[width=\linewidth]{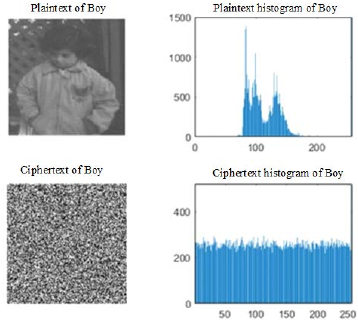}
    \vspace{0.3em}
    { (a) Histograms of the Boy plaintext/ciphertext.}
  \end{minipage}
  \hfill
  \begin{minipage}[b]{0.48\linewidth}
    \centering
    \includegraphics[width=\linewidth]{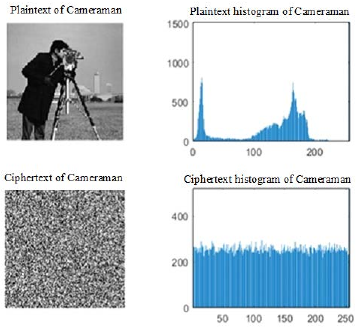}
    \vspace{0.3em}
    { (b) Histograms of the Cameraman plaintext/ciphertext.}
  \end{minipage}
  \caption{Histograms of plaintext/ciphertext for different images.}
  \label{fig2}
\end{figure}


(2) Correlation analysis: The correlation analysis results for the plaintext and ciphertext are summarized in Tab.~\ref{tab1}.

\begin{table}[!h]
  \centering
  \caption{Correlation analysis}
  \label{tab1}
  \begin{tabular}{lllll}
    \hline
    Image                       & Plain/Cipher & Horizontal  &  Vertical & Diagonal    \\ \hline
    \multirow{2}{*}{Boy}       & Plaintext    & 0.9007  & 0.9002 & 0.8301  \\ \cline{2-5} 
                               & Ciphertext    & -0.0131 & 0.0161 & -0.0146 \\ \hline
    \multirow{2}{*}{Cameraman} & Plaintext    & 0.9090  & 0.9344 & 0.8787  \\ \cline{2-5} 
                               & Ciphertext    & 0.0325  & 0.0234 & 0.0344  \\ \hline
    \end{tabular}
\end{table}

(3) NPCR and UACI analysis: The NPCR and UACI results for the plaintext and ciphertext are given in Tab.~\ref{tab2}.

\begin{table}[htbp]
  \centering
  \caption{NPCR and UACI analysis}
  \label{tab2}
  \begin{tabular}{llll}
    \hline
    Image & Influence between plain and cipher & NPCR & UACI      \\ \hline
    Boy & Plaintext affects ciphertext     & 99.1782\% & 33.4051\%  \\ \hline
    Cameraman & Plaintext affects ciphertext     & 99.4213\% & 33.5664\% \\ \hline
    \end{tabular}
\end{table}

(4) Information entropy analysis: The information entropy of the plaintext and ciphertext is reported in Tab.~\ref{tab3}.
\begin{table}[htbp]
  \centering
  \caption{Information entropy analysis}
  \label{tab3}
  \begin{tabular}{llll}
    \hline
    \multicolumn{2}{C}{Boy}             & \multicolumn{2}{C}{Cameraman}       \\ \hline
    \multicolumn{1}{l}{Plaintext}     & Ciphertext     & \multicolumn{1}{l}{Plaintext}     & Ciphertext      \\ \hline
    \multicolumn{1}{l}{3.0622} & 7.9866 & \multicolumn{1}{l}{4.9295} & 7.9874 \\ \hline
    \end{tabular}
\end{table}

\subsection{ISBDA Cryptanalysis of the Three-Stage PNDCC}
From Fig.~\ref{fig2} and Tabs.~\ref{tab1}-\ref{tab3}, it is evident that the three-stage permutation-diffusion-permutation PNDCC in Fig.~\ref{fig1} meets all statistical indicators after two rounds. However, according to Eq.~\eqref{eq1}, although the diffusion equation is sensitive from ciphertext to plaintext, it has two vulnerabilities: homogeneous operations and insensitivity from plaintext to ciphertext. 
The homogeneous-operations vulnerability can be exploited by a differential attack based on homogeneous operations~\cite{ref26,ref27,ref29,ref30}. The plaintext-to-ciphertext insensitivity can be exploited by two methods: the first is an S-PTC attack~\cite{ref28}, and the second is a chain attack to be introduced in Sec.~\ref{sec:chained-attack}. Moreover, noting that Fig.~\ref{fig1} adopts a three-stage permutation-diffusion-permutation structure, the ISBDA proposed in this section can also be applied.

The main feature of ISBDA is that, for any three-stage permutation-diffusion-permutation PNDCC, as long as the input-output characteristic of its diffusion equation satisfies the impulse-input/step-output condition, ISBDA can be used to break it. In short, ISBDA provides a new avenue for breaking this class of chaotic ciphers.

(1) Basic principle

For the three-stage permutation-diffusion-permutation PNDCC in Fig.~\ref{fig1}, the corresponding diffusion equation satisfies the impulse-input/step-output condition from $P$ to $C$, as shown in Fig.~\ref{fig4}. Therefore, an impulse-step-based differential attack can be used to break the three-stage permutation-diffusion-permutation PNDCC.
\begin{figure*}[htbp]
  \centering
  \includegraphics[width=0.85\textwidth]{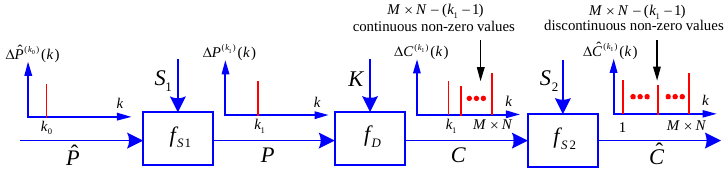}
  \caption{Block diagram of breaking a three-stage permutation-diffusion-permutation PNDCC using a left-to-right impulse-step-based differential attack.}
  \label{fig4}
  \end{figure*}

According to Fig.~\ref{fig4}, let the differential input be an impulse function, whose mathematical expression is
\begin{equation}
  \label{eq2}
  \left\{ \begin{aligned}
    & {{{\hat{P}}}^{({{V}_{1}})}}(k)={{V}_{1}} \\ 
   & {{{\hat{P}}}^{({{k}_{0}})}}(k)=\left\{ \begin{aligned}
    & {{V}_{2}}\ne {{V}_{1}}\text{  (if }k={{k}_{0}}) \\ 
   & {{V}_{2}}={{V}_{1}}\text{  (if }k\ne {{k}_{0}}) \\ 
  \end{aligned} \right. \\ 
   & \Delta {{{\hat{P}}}^{({{k}_{0}})}}(k)={{{\hat{P}}}^{({{k}_{0}})}}(k)\otimes {{{\hat{P}}}^{({{V}_{1}})}}(k)=\left\{ \begin{aligned}
    & {{V}_{2}}\otimes {{V}_{1}}\text{  (if }k={{k}_{0}}) \\ 
   & 0\text{          (if }k\ne {{k}_{0}}) \\ 
  \end{aligned} \right. \\ 
  \end{aligned} \right.,
\end{equation}
where $V_{1},V_{2}\in\{0,1,2,\ldots,255\}$, $\otimes \in \{\oplus ,\dot{-}\}$, and $k,k_{0}=1,2,\ldots,M\times N$. After the pre-permutation, the impulse becomes
\begin{equation}
  \label{eq3}
  \begin{aligned}
    \Delta {{P}^{({{k}_{1}})}}(k)&={{P}^{({{k}_{1}})}}(k)\otimes {{P}^{({{V}_{1}})}}(k)\\
  &=\left\{ \begin{aligned}
  & {{V}_{2}}\otimes {{V}_{1}}\text{  (if }k={{k}_{1}}) \\ 
 & 0\text{          (if }k\ne {{k}_{1}}) \\ 
\end{aligned} \right.,
\end{aligned}
\end{equation}    
where $\otimes \in \{\oplus ,\dot{-}\}$, $V_{1},V_{2}\in\{0,1,2,\cdots ,255\}$, $k=1,2,\ldots,M\times N$, and $k_{1}\in\{1,2,\cdots ,M\times N\}$. As $k_{0}$ traverses all positions $k_{0}\in\{1,2,\cdots ,M\times N\}$, $k_{1}$ correspondingly traverses all positions $k_{1}\in\{1,2,\cdots ,M\times N\}$. In this case, if the differential output of the diffusion stage satisfies                                 
\begin{equation}
  \label{eq4}
  \begin{aligned}
  \Delta {{C}^{({{k}_{1}})}}(k)&={{C}^{({{k}_{1}})}}(k)\otimes {{C}^{({{V}_{1}})}}(k)\\
  &=\left\{ \begin{aligned}
  & 0\text{        (if }k<{{k}_{1}}) \\ 
 & \text{not }0\text{  (if }k\ge {{k}_{1}}) \\ 
\end{aligned} \right.,
\end{aligned}
\end{equation}                                       
then note that the post-permutation does not change the magnitudes of $\Delta C^{(k_{1})}(k)$; it only changes the positions of nonzero pixels. Therefore, by simply counting the number of nonzero pixels at the output, one can determine the value of $k_{1}$ and thus locate its position. Since the position $k_{0}$ is known, comparing the positional change between $k_{0}$ and $k_{1}$ allows the pre-permutation to be recovered.

Once the pre-permutation is recovered, the intermediate variable $P$ becomes known. On this basis, choose an all-zero ciphertext $\widehat{C}=0$ to neutralize the post-permutation, yielding $C=0$. Under a chosen-ciphertext attack, the plaintext $\widehat{P}$ corresponding to $\widehat{C}=0$ is known; furthermore, since the pre-permutation has been recovered, $P$ is also known. Evidently, when both $P$ and $C$ are known, the intermediate diffusion stage can be broken. With both the pre-permutation and the intermediate diffusion recovered, $C$ becomes known as well, and the post-permutation can then be deciphered. By thus defeating each stage in turn, the entire three-stage permutation-diffusion-permutation PNDCC in Fig.~\ref{fig1} is ultimately broken.

(2) Proof that the diffusion stage in Fig.~\ref{fig1} satisfies the impulse-input/step-output condition

From Fig.~\ref{fig1}, for $k=1,2,\ldots,M\times N$, the expressions of $P^{(V_{1})}(k)$, $P^{(k_{1})}(k)$, and $\Delta P^{(k_{1})}(k)$ are
\begin{equation}
  \label{eq5}
\left\{ \begin{aligned}
  & {{P}^{({{V}_{1}})}}(k)={{V}_{1}} ,\\
  &{{P}^{({{k}_{1}})}}(k)=\left\{ \begin{aligned}
  & {{V}_{2}}\ne {{V}_{1}}\text{  (if }k={{k}_{1}}) \\ 
 & {{V}_{2}}={{V}_{1}}\text{  (if }k\ne {{k}_{1}}) \\ 
\end{aligned} \right. \\ 
 & \begin{aligned}\Delta {{P}^{({{k}_{1}})}}(k)&={{P}^{({{k}_{1}})}}(k)\dot{-}{{P}^{({{V}_{1}})}}(k)\\
  &=\left\{ \begin{aligned}
  & {{V}_{2}}\dot{-}{{V}_{1}}\text{  (if }k={{k}_{1}}) \\ 
 & 0\text{          (if }k\ne {{k}_{1}}) \\ 
\end{aligned} \right. \\ 
\end{aligned}
\end{aligned} \right.,
\end{equation} 
where ${{V}_{1}}\in \{0,1,2,3,\cdots ,255\}$, $k=1,2,\ldots,M\times N$, $k_{1}\in\{1,2,\ldots,M\times N\}$, $V_{2}\in\{0,1,2,\ldots,255\}$ with $V_{2}\ne V_{1}$, $P^{(V_{1})}(0)=P^{(k_{1})}(0)$, $C^{(V_{1})}(0)=C^{(k_{1})}(0)$, and $K(0)$ is the initial value. From Eq.~\eqref{eq1} and Eq.~\eqref{eq5}, the corresponding expressions of $C^{(V_{1})}(k)$, $C^{(k_{1})}(k)$, and $\Delta C^{(k_{1})}(k)$ are
\begin{equation}
  \label{eq6}
\begin{aligned}
  & \left\{ \begin{aligned}
  & {{C}^{({{k}_{1}})}}(k)=\bmod ({{P}^{({{k}_{1}})}}(k)-{{C}^{({{k}_{1}})}}(k-1)-K(k),256) \\ 
 & {{C}^{({{V}_{1}})}}(k)=\bmod ({{P}^{({{V}_{1}})}}(k)-{{C}^{({{V}_{1}})}}(k-1)-K(k),256) \\ 
\end{aligned} \right. \\ 
 & \to \Delta {{C}^{({{k}_{1}})}}(k)={{C}^{({{k}_{1}})}}(k)\dot{-}{{C}^{({{V}_{1}})}}(k)\\
 & =\Delta {{P}^{({{k}_{1}})}}(k)\dot{-}\Delta {{C}^{({{k}_{1}})}}(k-1) \\ 
 & \to \left\{ \begin{aligned}
  & \Delta {{C}^{({{k}_{1}})}}(1)={{C}^{({{k}_{1}})}}(1)\dot{-}{{C}^{({{V}_{1}})}}(1)=\Delta {{P}^{({{k}_{1}})}}(1)\dot{-}\Delta {{C}^{({{k}_{1}})}}(0)=0 \\ 
 & \Delta {{C}^{({{k}_{1}})}}(2)={{C}^{({{k}_{1}})}}(2)\dot{-}{{C}^{({{V}_{1}})}}(2)=\Delta {{P}^{({{k}_{1}})}}(2)\dot{-}\Delta {{C}^{({{k}_{1}})}}(1)=0 \\ 
 & \cdots  \\ 
 & \Delta {{C}^{({{k}_{1}})}}({{k}_{1}}-1)={{C}^{({{k}_{1}})}}({{k}_{1}}-1)\dot{-}{{C}^{({{V}_{1}})}}({{k}_{1}}-1)=\Delta {{P}^{({{k}_{1}})}}({{k}_{1}}-1)\dot{-}\Delta {{C}^{({{k}_{1}})}}({{k}_{1}}-2)=0 \\ 
 & \Delta {{C}^{({{k}_{1}})}}({{k}_{1}})={{C}^{({{k}_{1}})}}({{k}_{1}})\dot{-}{{C}^{({{V}_{1}})}}({{k}_{1}})=\Delta {{P}^{({{k}_{1}})}}({{k}_{1}})\dot{-}\Delta {{C}^{({{k}_{1}})}}({{k}_{1}}-1) \\ 
  & \phantom{\Delta {{C}^{({{k}_{1}})}}({{k}_{1}})}={{V}_{2}}\dot{-}{{V}_{1}}\ne 0 \\ 
 & \Delta {{C}^{({{k}_{1}})}}({{k}_{1}}+1)={{C}^{({{k}_{1}})}}({{k}_{1}}+1)\dot{-}{{C}^{({{V}_{1}})}}({{k}_{1}}+1)=\Delta {{P}^{({{k}_{1}})}}({{k}_{1}}+1)\dot{-}\Delta {{C}^{({{k}_{1}})}}({{k}_{1}}) \\ 
 & \phantom{\Delta {{C}^{({{k}_{1}})}}({{k}_{1}}+1)}=256-({{V}_{2}}\dot{-}{{V}_{1}})\ne 0 \\ 
 & \Delta {{C}^{({{k}_{1}})}}({{k}_{1}}+2)={{C}^{({{k}_{1}})}}({{k}_{1}}+2)\dot{-}{{C}^{({{V}_{1}})}}({{k}_{1}}+2)=\Delta {{P}^{({{k}_{1}})}}({{k}_{1}}+2)\dot{-}\Delta {{C}^{({{k}_{1}})}}({{k}_{1}}+1) \\ 
 & \phantom{\Delta {{C}^{({{k}_{1}})}}({{k}_{1}}+2)}=({{V}_{2}}\dot{-}{{V}_{1}})\ne 0 \\ 
 & \cdots  \\ 
 & \Delta {{C}^{({{k}_{1}})}}(M\times N)={{C}^{({{k}_{1}})}}(M\times N)\dot{-}{{C}^{({{V}_{1}})}}(M\times N)=\Delta {{P}^{({{k}_{1}})}}(M\times N)\dot{-}\Delta {{C}^{({{k}_{1}})}}(M\times N-1) \\ 
 & \phantom{\Delta {{C}^{({{k}_{1}})}}(M\times N)}=({{V}_{2}}\dot{-}{{V}_{1}})\text{  or  (}256-({{V}_{2}}\dot{-}{{V}_{1}}))\ne 0  
\end{aligned} \right.. 
\end{aligned}
\end{equation} 
Based on Eq.~\eqref{eq5}-\eqref{eq6}, it suffices to choose $V_{2}\ne V_{1}$ (with $V_{2},V_{1}\in\{0,1,2,\ldots,255\}$). It is then clear that Eq.~\eqref{eq5} and Eq.~\eqref{eq6} satisfy the impulse-input/step-output condition, and thus the entire three-stage permutation-diffusion-permutation PNDCC shown in Fig.~\ref{fig1} can be broken.

\subsection{Algorithm Flowchart of ISBDA in the General Case}
In general, the diffusion equation in Fig.~\ref{fig1} may be fairly complicated, and determining whether it satisfies the impulse-input/step-output condition by the method given in Eq.~\eqref{eq6} is often difficult. To address this, we present the ISBDA algorithm flowchart for the general case, as shown in Fig.~\ref{fig5}.
\begin{figure}[htbp] 
  \centering 
  \includegraphics[width=0.58\textwidth]{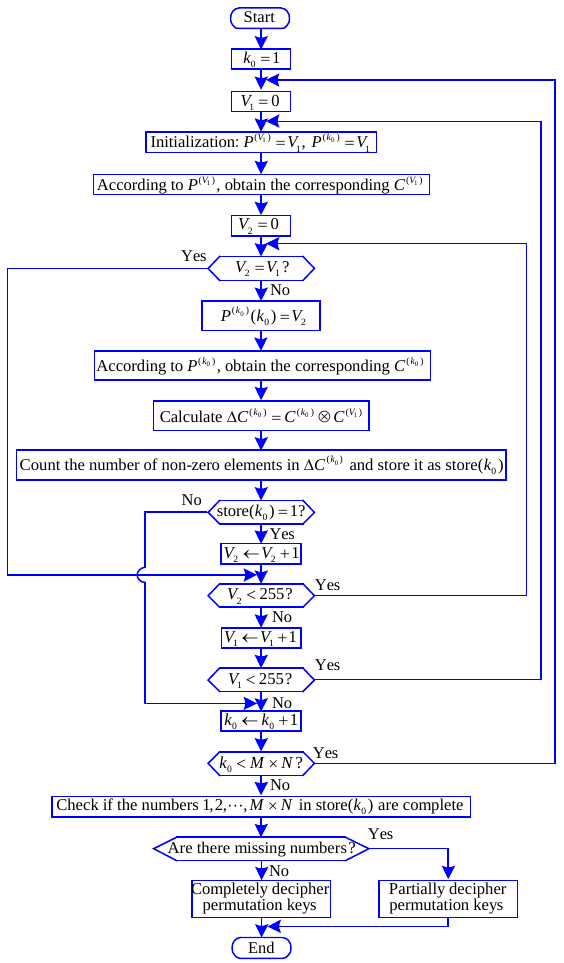} 
  \caption{Flowchart of the ISBDA algorithm} 
  \label{fig5} 
\end{figure}

\subsection{Analysis of the Impulse-Input/Step-Output Condition}
In the diffusion equation of the three-stage permutation-diffusion-permutation PNDCC 
(Fig.~\ref{fig1}), the plaintext term has no delay, while the ciphertext has a feedback term, generally expressed as $C(k)=f_{D}\big(P(k),C(k-1),K(k)\big)$. Expanding $C(k)=f_{D}\big(P(k),C(k-1),K(k)\big)$ for $k=1,2,\ldots,M\times N$ yields
\begin{equation}
  \label{eq7}
  \left\{ \begin{aligned}
  & C(1)={{f}_{D}}\left( P(1),C(0),K(1) \right)={{g}_{1}}\left( P(1) \right) \\ 
 & C(2)={{f}_{D}}\left( P(2),C(1),K(2) \right)={{g}_{2}}\left( P(1),P(2) \right) \\ 
 & \cdots  \\ 
 & C(k)={{f}_{D}}\left( P(k),C(k-1),K(k) \right)={{g}_{k}}\left( P(1),P(2),P(3),\cdots ,P(k) \right) \\ 
 & \cdots  \\ 
 & C(M\times N)={{f}_{D}}\left( P(M\times N),C(M\times N-1),K(M\times N) \right)={{g}_{M\times N}}\left( P(1),P(2),P(3),\cdots ,P(M\times N) \right) \\ 
\end{aligned} \right..
\end{equation}
Note that the functions $g_{i}$ ($i=1,2,\ldots,M\times N$) describe only the relationship between plaintext and ciphertext under key-fixed conditions. According to cryptanalytic convention, the key and initial conditions are treated as invariants throughout the analysis, and thus are not included explicitly in $g_{i}$. The above shows that when the plaintext $P(k)$ ($k=1,2,\ldots,M\times N$) has a perturbation at position $k$, and if the diffusion function $f_{D}$ ensures
\begin{equation}
  \label{eq8}
\left\{ \begin{aligned}
  & C(1)={{f}_{D}}\left( P(1),C(0),K(1) \right)={{g}_{1}}\left( P(1) \right)=0 \\ 
 & C(2)={{f}_{D}}\left( P(2),C(1),K(2) \right)={{g}_{2}}\left( P(1),P(2) \right)=0 \\ 
 & \cdots  \\ 
 & C(k-1)={{f}_{D}}\left( P(k-1),C(k-2),K(k-1) \right)={{g}_{k-1}}\left( P(1),P(2),P(3),\cdots ,P(k-1) \right)=0 \\ 
 & C(k)={{f}_{D}}\left( P(k),C(k-1),K(k) \right)={{g}_{k}}\left( P(1),P(2),P(3),\cdots ,P(k) \right)\ne 0 \\ 
 & C(k+1)={{f}_{D}}\left( P(k+1),C(k),K(k+1) \right)={{g}_{k+1}}\left( P(1),P(2),P(3),\cdots ,P(k),P(k+1) \right)\ne 0 \\ 
 & \cdots  \\ 
 & C(M\times N)={{f}_{D}}\left( P(M\times N),C(M\times N-1),K(M\times N) \right)={{g}_{M\times N}}\left( P(1),P(2),P(3),\cdots ,P(M\times N) \right)\ne 0 \\ 
\end{aligned} \right.,
\end{equation}
then the ciphertext samples $C(k),C(k+1),\ldots,C(M\times N)$ at positions $k,k+1,\ldots,M\times N$ contain exactly $M\times N+1-k$ nonzero perturbation outputs. This guarantees that when the plaintext has an impulse input at position $k$ ($k=1,2,\ldots,M\times N$), the corresponding ciphertext exhibits a step output over positions $k,k+1,\ldots,M\times N$ ($k=1,2,\ldots,M\times N$).

\subsection{Simulation Experiments for Breaking the Three-Stage Permutation-Diffusion-Permutation PNDCC}
\begin{example}
Following the ISBDA algorithm flow  shown in Fig.~\ref{fig5}, we break the three-stage permutation-diffusion-permutation PNDCC whose diffusion equation is given by Eq.~\eqref{eq1}. 
\end{example}
\begin{example}
    Following the ISBDA algorithm flow  shown in Fig.~\ref{fig5}, we break the three-stage permutation-diffusion-permutation PNDCC whose diffusion equation is a heterogeneous composite-operation diffusion equation, as
    \begin{equation}
      \label{eq8-1}
      C(k)= \bmod(P(k)+K(k),256)\oplus C(k-1)\oplus K(k-1).
    \end{equation}     
\end{example}
The simulation results are shown in Fig.\ref{fig6}. Both examples are successfully decrypted. Since the original plaintext image is identical in the two simulations, the recovered images are the same; we therefore present two distinct encrypted images. For ISBDA, the complexity is $M \times N + 1$ plaintext–ciphertext pairs. It should be emphasized that the proposed algorithm is also applicable to diffusion equations in the general case.
\begin{figure}[htbp]
  \centering
\begin{minipage}{0.2\linewidth}
\centering
\includegraphics[width=\linewidth]{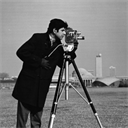}\\
a)
\end{minipage}
\begin{minipage}{0.2\linewidth}
\centering
\includegraphics[width=\linewidth]{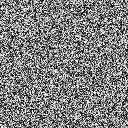}\\
b)
\end{minipage}
\begin{minipage}{0.2\linewidth}
    \centering
    \includegraphics[width=\linewidth]{figs/encrypted128.png}
    c)
    \end{minipage}
\begin{minipage}{0.2\linewidth}
\centering
\includegraphics[width=\linewidth]{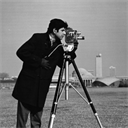}
d)
\end{minipage}
  \caption{Breaking results for Cameraman: (a) plaintext; (b) encrypted by Eq.~\eqref{eq1}; (c) encrypted by Eq.~\eqref{eq9}; (d) deciphered.}
  \label{fig6}
\end{figure}

\section{Generalized Iterative Equation and Associated Position Expression for Multi-Stage PNDCC}
\label{sec:general-iter-eq}
In Sec.~\ref{sec:3level-pndcc-stat-isbda}, we analyzed the security of the three-stage permutation-diffusion-permutation PNDCC. In this section, we first derive generalized iterative equations and  associated position expressions for multi-stage PNDCCs, thereby providing a universal mathematical model to support the security analysis in Sec.~\ref{sec:chained-attack}. For brevity, we take as examples the three-stage diffusion–permutation–diffusion PNDCC, the four-stage permutation–diffusion–permutation–diffusion PNDCC, the four-stage diffusion–permutation–diffusion–permutation PNDCC, and the five-stage permutation–\allowbreak diffusion–\allowbreak permutation–diffusion–\allowbreak permutation PNDCC, and derive their generalized iterative equations and associated position expressions; the method readily extends to arbitrary multi-stage PNDCCs without loss of generality. Moreover, since encryption and decryption are inverses, we present the generalized decryption iterative equation; the corresponding generalized encryption equation follows by inversion. Unless otherwise stated, “generalized iterative equation” refers to the generalized decryption iterative equation.

\subsection{Generalized Iterative Equation for Multi-Stage PNDCC}
Let the decryption diagram for multi-stage PNDCC, including  the three-stage diffusion–permutation–diffusion PNDCC, the four-stage permutation–diffusion–permutation–diffusion PNDCC, the four-stage diffusion\allowbreak–permutation\allowbreak–diffusion\allowbreak–permutation PNDCC, and the five-stage permutation-diffusion-permutation-diffusion-permutation structure, be as shown in Fig.~\ref{fig7}. 

The general form of the pre-diffusion inverse in Fig.~\ref{fig7} is
\begin{equation}
  \label{eq9}
{{P}_{1}}\left( k \right)=f_{D1}^{-1}\left({{C}_{1}}\left( k \right),{{C}_{1}}\left( k-1 \right),{{C}_{1}}\left( k-2 \right),\cdots ,{{C}_{1}}\left( k-{{m}_{1}}\right),{{K}_{1}}(k) \right),
\end{equation}
and the general form of the post-diffusion inverse is
\begin{equation}
  \label{eq10}
{{P}_{2}}\left( k \right)=f_{D2}^{-1}\left(
{{C}_{2}}\left( k \right),{{C}_{2}}\left( k-1 \right),{{C}_{2}}\left( k-2 \right),\cdots , {{C}_{2}}\left( k-{{m}_{2}} \right),{{K}_{2}}(k) \right),
\end{equation}
where $m_{1},m_{2}$ are positive integers and $k=1,2,\ldots,M\times N$.
\begin{figure*}[htbp] 
  \centering 
  \includegraphics[width=0.85\textwidth]{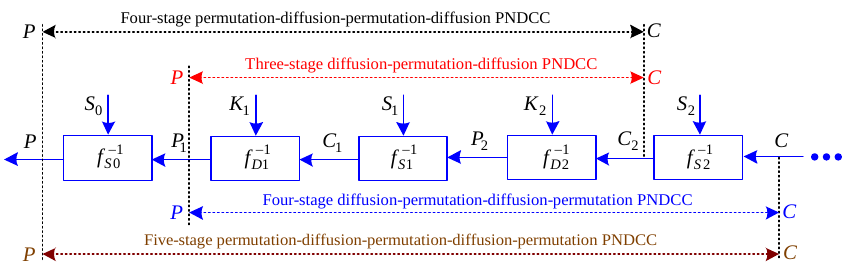}
  \caption{Decryption block diagram for multi-stage PNDCC.} 
  \label{fig7}
\end{figure*}

(1) From  Fig.~\ref{fig7} and Eqs.~\eqref{eq9}-\eqref{eq10}, the iterative equations corresponding to the three-stage diffusion–permutation–diffusion PNDCC are obtained as
\begin{equation}
    \begin{aligned}
        \label{eq11}
        & \left\{ \begin{aligned}
        & {{P}_{1}}\left( k \right)=f_{D1}^{-1}\left( {{C}_{1}}\left( k \right),{{C}_{1}}\left( k-1 \right),{{C}_{1}}\left( k-2 \right),\cdots ,{{C}_{1}}\left( k-{{m}_{1}} \right),{{K}_{1}}(k) \right) \\ 
       & {{P}_{2}}\left( k \right)=f_{D2}^{-1}\left( {{C}_{2}}\left( k \right),{{C}_{2}}\left( k-1 \right),{{C}_{2}}\left( k-2 \right),\cdots ,{{C}_{2}}\left( k-{{m}_{2}} \right),{{K}_{2}}(k) \right) \\ 
      \end{aligned} \right. \\ 
       & \to \left\{ \begin{aligned}
        & P\left( k \right)=f_{D1}^{-1}\left( {{C}_{1}}\left( k \right),{{C}_{1}}\left( k-1 \right),{{C}_{1}}\left( k-2 \right),\cdots ,{{C}_{1}}\left( k-{{m}_{1}} \right),{{K}_{1}}(k) \right) \\ 
       & {{P}_{2}}\left( k \right)=f_{D2}^{-1}\left( C\left( k \right),C\left( k-1 \right),C\left( k-2 \right),\cdots ,C\left( k-{{m}_{2}} \right),{{K}_{2}}(k) \right) \\ 
      \end{aligned} \right. \\ 
       & \to \left\{ \begin{aligned}
        & P\left( k \right)=f_{D1}^{-1}\left( {{P}_{2}}\left( f_{S1}^{-1}(k) \right),{{P}_{2}}\left( f_{S1}^{-1}(k-1) \right),{{P}_{2}}\left( f_{S1}^{-1}(k-2) \right),\cdots ,{{P}_{2}}\left( f_{S1}^{-1}(k-{{m}_{1}}) \right),{{K}_{1}}(k) \right) \\ 
       & {{P}_{2}}\left( f_{S1}^{-1}(k) \right)=f_{D2}^{-1}\left( C\left( f_{S1}^{-1}(k) \right),C\left( f_{S1}^{-1}(k)-1 \right),C\left( f_{S1}^{-1}(k)-2 \right),\cdots ,C\left( f_{S1}^{-1}(k)-{{m}_{2}} \right),{{K}_{2}}(f_{S1}^{-1}(k)) \right) \\ 
       & {{P}_{2}}\left( f_{S1}^{-1}(k-1) \right)=f_{D2}^{-1}\left(\begin{aligned}
        & C\left( f_{S1}^{-1}(k-1) \right),C\left( f_{S1}^{-1}(k-1)-1 \right),C\left( f_{S1}^{-1}(k-1)-2 \right),\cdots ,\\
        &C\left( f_{S1}^{-1}(k-1)-{{m}_{2}} \right),{{K}_{2}}(f_{S1}^{-1}(k-1)) 
      \end{aligned}
        \right) \\ 
       & {{P}_{2}}\left( f_{S1}^{-1}(k-2) \right)=f_{D2}^{-1}\left(\begin{aligned}
        & C\left( f_{S1}^{-1}(k-2) \right),C\left( f_{S1}^{-1}(k-2)-1 \right),C\left( f_{S1}^{-1}(k-2)-2 \right),\cdots ,\\
        &C\left( f_{S1}^{-1}(k-2)-{{m}_{2}} \right),{{K}_{2}}(f_{S1}^{-1}(k-2)) 
      \end{aligned}
        \right) \\ 
       & \cdots  \\ 
       & {{P}_{2}}\left( f_{S1}^{-1}(k-{{m}_{1}}) \right)=f_{D2}^{-1}\left(\begin{aligned}
        & C\left( f_{S1}^{-1}(k-{{m}_{1}}) \right),C\left( f_{S1}^{-1}(k-{{m}_{1}})-1 \right),C\left( f_{S1}^{-1}(k-{{m}_{1}})-2 \right),\cdots ,\\
        &C\left( f_{S1}^{-1}(k-{{m}_{1}})-{{m}_{2}} \right),{{K}_{2}}(f_{S1}^{-1}(k-{{m}_{1}})) 
      \end{aligned}
        \right) \\ 
      \end{aligned} \right. \\ 
       & \to P\left( k \right)=f_{D1}^{-1}\left( \begin{aligned}
        & f_{D2}^{-1}\left( C\left( f_{S1}^{-1}(k) \right),C\left( f_{S1}^{-1}(k)-1 \right),C\left( f_{S1}^{-1}(k)-2 \right),\cdots ,C\left( f_{S1}^{-1}(k)-{{m}_{2}} \right),{{K}_{2}}(f_{S1}^{-1}(k)) \right), \\ 
       & f_{D2}^{-1}\left(\begin{aligned}
        & C\left( f_{S1}^{-1}(k-1) \right),C\left( f_{S1}^{-1}(k-1)-1 \right),C\left( f_{S1}^{-1}(k-1)-2 \right),\cdots ,\\
        &C\left( f_{S1}^{-1}(k-1)-{{m}_{2}} \right),{{K}_{2}}(f_{S1}^{-1}(k-1)) 
      \end{aligned}
        \right), \\ 
       & f_{D2}^{-1}\left(\begin{aligned}
        & C\left( f_{S1}^{-1}(k-2) \right),C\left( f_{S1}^{-1}(k-2)-1 \right),C\left( f_{S1}^{-1}(k-2)-2 \right),\cdots ,\\
        &C\left( f_{S1}^{-1}(k-2)-{{m}_{2}} \right),{{K}_{2}}(f_{S1}^{-1}(k-2)) 
      \end{aligned}
        \right),\\ 
        &\cdots , \\ 
       & f_{D2}^{-1}\left(\begin{aligned}
        & C\left( f_{S1}^{-1}(k-{{m}_{1}}) \right),C\left( f_{S1}^{-1}(k-{{m}_{1}})-1 \right),C\left( f_{S1}^{-1}(k-{{m}_{1}})-2 \right),\cdots ,\\
        &C\left( f_{S1}^{-1}(k-{{m}_{1}})-{{m}_{2}} \right),{{K}_{2}}(f_{S1}^{-1}(k-{{m}_{1}})) 
      \end{aligned}
        \right),{{K}_{1}}(k) \\ 
      \end{aligned} \right) \\ 
      \end{aligned} 
\end{equation}

(2) From  Fig.~\ref{fig7} and Eq.\eqref{eq11}, the iterative equations for the four-stage permutation–\allowbreak diffusion–\allowbreak permutation–\allowbreak diffusion PNDCC can be derived as
\begin{equation}
    \label{eq11-1}
    {{P}_{1}}\left( k \right)=f_{D1}^{-1}\left( \begin{aligned}
        & f_{D2}^{-1}\left(\begin{aligned}
          & C\left( f_{S1}^{-1}(k) \right),C\left( f_{S1}^{-1}(k)-1 \right),C\left( f_{S1}^{-1}(k)-2 \right),\cdots ,\\
        &C\left( f_{S1}^{-1}(k)-{{m}_{2}} \right),{{K}_{2}}\left(f_{S1}^{-1}(k)\right) 
      \end{aligned}
        \right), \\ 
       & f_{D2}^{-1}\left(\begin{aligned}
        & C\left( f_{S1}^{-1}(k-1) \right),C\left( f_{S1}^{-1}(k-1)-1 \right),C\left( f_{S1}^{-1}(k-1)-2 \right),\cdots ,\\
        &C\left( f_{S1}^{-1}(k-1)-{{m}_{2}} \right),{{K}_{2}}\left(f_{S1}^{-1}(k-1)\right) 
      \end{aligned}
        \right), \\ 
       & f_{D2}^{-1}\left(\begin{aligned}
        & C\left( f_{S1}^{-1}(k-2) \right),C\left( f_{S1}^{-1}(k-2)-1 \right),C\left( f_{S1}^{-1}(k-2)-2 \right),\cdots ,\\
        &C\left( f_{S1}^{-1}(k-2)-{{m}_{2}} \right),{{K}_{2}}\left(f_{S1}^{-1}(k-2)\right) 
      \end{aligned}
        \right),\\ 
        &\cdots , \\ 
       & f_{D2}^{-1}\left(\begin{aligned}
        & C\left( f_{S1}^{-1}(k-{{m}_{1}}) \right),C\left( f_{S1}^{-1}(k-{{m}_{1}})-1 \right),C\left( f_{S1}^{-1}(k-{{m}_{1}})-2 \right),\cdots ,\\
        &C\left( f_{S1}^{-1}(k-{{m}_{1}})-{{m}_{2}} \right),{{K}_{2}}\left(f_{S1}^{-1}(k-{{m}_{1}})\right) 
      \end{aligned}
        \right),\\
        &{{K}_{1}}(k) \\ 
      \end{aligned} \right)     
\end{equation}
Let ${{P}_{1}}(k)=P\left( f_{S0}^{{}}(k) \right)$ in the above equation, then we obtain
\begin{equation}
    \label{eq12}
    \begin{aligned}
        & P\left( f_{S0}^{{}}(k) \right)=f_{D1}^{-1}\left( \begin{aligned}
        & f_{D2}^{-1}\left(\begin{aligned}
          & C\left( f_{S1}^{-1}(k) \right),C\left( f_{S1}^{-1}(k)-1 \right),C\left( f_{S1}^{-1}(k)-2 \right),\cdots ,\\
          &C\left( f_{S1}^{-1}(k)-{{m}_{2}} \right),{{K}_{2}}\left(f_{S1}^{-1}(k)\right)
        \end{aligned}
          \right), \\ 
       & f_{D2}^{-1}\left(\begin{aligned}
        & C\left( f_{S1}^{-1}(k-1) \right),C\left( f_{S1}^{-1}(k-1)-1 \right),C\left( f_{S1}^{-1}(k-1)-2 \right),\cdots ,\\
        &C\left( f_{S1}^{-1}(k-1)-{{m}_{2}} \right),{{K}_{2}}\left(f_{S1}^{-1}(k-1)\right) 
      \end{aligned}
        \right), \\ 
       & f_{D2}^{-1}\left(\begin{aligned}
        & C\left( f_{S1}^{-1}(k-2) \right),C\left( f_{S1}^{-1}(k-2)-1 \right),C\left( f_{S1}^{-1}(k-2)-2 \right),\cdots ,\\
        &C\left( f_{S1}^{-1}(k-2)-{{m}_{2}} \right),{{K}_{2}}\left(f_{S1}^{-1}(k-2)\right) 
      \end{aligned}
        \right),\\ 
        &\cdots , \\ 
       & f_{D2}^{-1}\left(\begin{aligned}
        & C\left( f_{S1}^{-1}(k-{{m}_{1}}) \right),C\left( f_{S1}^{-1}(k-{{m}_{1}})-1 \right),C\left( f_{S1}^{-1}(k-{{m}_{1}})-2 \right),\cdots ,\\
        &C\left( f_{S1}^{-1}(k-{{m}_{1}})-{{m}_{2}} \right),{{K}_{2}}\left(f_{S1}^{-1}(k-{{m}_{1}})\right) 
      \end{aligned}
        \right),\\ 
        &{{K}_{1}}(k) \\ 
      \end{aligned} \right) \\ 
       & \to P\!\left( k \right)=f_{D1}^{-1}\!\left( \begin{aligned}
        & f_{D2}^{-1}\!\left(\begin{aligned}
        & C\!\left( f_{S1}^{-1}\!\left( f_{S0}^{-1}(k) \right) \right),C\!\left( f_{S1}^{-1}\!\left( f_{S0}^{-1}(k) \right)-1 \right),C\!\left( f_{S1}^{-1}\!\left( f_{S0}^{-1}(k) \right)-2 \right),\cdots ,\\
        &C\!\left( f_{S1}^{-1}\!\left( f_{S0}^{-1}(k) \right)-{{m}_{2}} \right),{{K}_{2}}\left( f_{S1}^{-1}\!\left( f_{S0}^{-1}(k) \right) \right) 
      \end{aligned}
        \right), \\ 
       & f_{D2}^{-1}\!\left(\begin{aligned}
        & C\!\left( f_{S1}^{-1}\!\left( f_{S0}^{-1}(k)-1 \right) \right),C\!\left( f_{S1}^{-1}\!\left( f_{S0}^{-1}(k)-1 \right)-1 \right),C\!\left( f_{S1}^{-1}\!\left( f_{S0}^{-1}(k)-1 \right)-2 \right),\cdots ,\\
        &C\!\left( f_{S1}^{-1}\!\left( f_{S0}^{-1}(k)-1 \right)-{{m}_{2}} \right),{{K}_{2}}\left( f_{S1}^{-1}\!\left( f_{S0}^{-1}(k)-1 \right) \right) 
      \end{aligned}
        \right), \\ 
       & f_{D2}^{-1}\!\left(\begin{aligned}
        & C\!\left( f_{S1}^{-1}\!\left( f_{S0}^{-1}(k)-2 \right) \right),C\!\left( f_{S1}^{-1}\!\left( f_{S0}^{-1}(k)-2 \right)-1 \right),C\!\left( f_{S1}^{-1}\!\left( f_{S0}^{-1}(k)-2 \right)-2 \right),\cdots ,\\
        &C\!\left( f_{S1}^{-1}\!\left( f_{S0}^{-1}(k)-2 \right)-{{m}_{2}} \right),{{K}_{2}}\left( f_{S1}^{-1}\!\left( f_{S0}^{-1}(k)-2 \right) \right) 
      \end{aligned}
        \right),\\ 
        &\cdots , \\ 
       & f_{D2}^{-1}\!\left( \begin{aligned}
        & C\!\left( f_{S1}^{-1}\!\left( f_{S0}^{-1}(k)-{{m}_{1}} \right) \right)\!,C\!\left( f_{S1}^{-1}\!\left( f_{S0}^{-1}(k)-{{m}_{1}} \right)-1 \right)\!,C\!\left( f_{S1}^{-1}\!\left( f_{S0}^{-1}(k)-{{m}_{1}} \right)-2 \right)\!,\cdots \!, \\ 
       & C\!\left( f_{S1}^{-1}\!\left( f_{S0}^{-1}(k)-{{m}_{1}} \right)-{{m}_{2}} \right),{{K}_{2}}\left( f_{S1}^{-1}\!\left( f_{S0}^{-1}(k)-{{m}_{1}} \right) \right) \\ 
      \end{aligned} \right),\\
      &{{K}_{1}}\left( f_{S0}^{-1}(k) \right) \\ 
      \end{aligned} \right) \\ 
      \end{aligned}      
\end{equation}

(3) From  Fig.~\ref{fig7} and Eq.\eqref{eq11}, the iterative equations for the four-stage diffusion–permutation–diffusion–permutation PNDCC can be derived as
\begin{equation}
    \label{eq12-1}
    P\left( k \right)=f_{D1}^{-1}\left( \begin{aligned}
        & f_{D2}^{-1}\left(\begin{aligned}
          & {{C}_{2}}\left( f_{S1}^{-1}(k) \right),{{C}_{2}}\left( f_{S1}^{-1}(k)-1 \right),{{C}_{2}}\left( f_{S1}^{-1}(k)-2 \right),\cdots ,\\
          &{{C}_{2}}\left( f_{S1}^{-1}(k)-{{m}_{2}} \right),{{K}_{2}}(f_{S1}^{-1}(k))
        \end{aligned}
         \right), \\ 
       & f_{D2}^{-1}\left(\begin{aligned}
        &{{C}_{2}}\left( f_{S1}^{-1}(k-1) \right),{{C}_{2}}\left( f_{S1}^{-1}(k-1)-1 \right),{{C}_{2}}\left( f_{S1}^{-1}(k-1)-2 \right),\cdots ,\\
        &{{C}_{2}}\left( f_{S1}^{-1}(k-1)-{{m}_{2}} \right),{{K}_{2}}(f_{S1}^{-1}(k-1)) 
      \end{aligned}
        \right), \\ 
       & f_{D2}^{-1}\left(\begin{aligned}
        & {{C}_{2}}\left( f_{S1}^{-1}(k-2) \right),{{C}_{2}}\left( f_{S1}^{-1}(k-2)-1 \right),{{C}_{2}}\left( f_{S1}^{-1}(k-2)-2 \right),\cdots ,\\
        &{{C}_{2}}\left( f_{S1}^{-1}(k-2)-{{m}_{2}} \right),{{K}_{2}}(f_{S1}^{-1}(k-2)) 
      \end{aligned}
        \right),\\ 
        &\cdots , \\ 
       & f_{D2}^{-1}\left(\begin{aligned}
        &  {{C}_{2}}\left( f_{S1}^{-1}(k-{{m}_{1}}) \right),{{C}_{2}}\left( f_{S1}^{-1}(k-{{m}_{1}})-1 \right),{{C}_{2}}\left( f_{S1}^{-1}(k-{{m}_{1}})-2 \right),\cdots ,\\
       &{{C}_{2}}\left( f_{S1}^{-1}(k-{{m}_{1}})-{{m}_{2}} \right),{{K}_{2}}(f_{S1}^{-1}(k-{{m}_{1}})) 
      \end{aligned}
       \right),\\ 
       &{{K}_{1}}(k) \\ 
      \end{aligned} \right)      
\end{equation}
Let ${{C}_{2}}(i)=C\left( f_{S2}^{-1}(i) \right)$ in the above equation, then we obtain
\begin{equation}
    \label{eq13}
    P\left( k \right)=f_{D1}^{-1}\left( \begin{aligned}
        & f_{D2}^{-1}\left(\begin{aligned}
          & C\left( f_{S2}^{-1}\left( f_{S1}^{-1}(k) \right) \right),C\left( f_{S2}^{-1}\left( f_{S1}^{-1}(k)-1 \right) \right),C\left( f_{S2}^{-1}\left( f_{S1}^{-1}(k)-2 \right) \right),\cdots ,\\ 
          &C\left( f_{S2}^{-1}\left( f_{S1}^{-1}(k)-{{m}_{2}} \right) \right),{{K}_{2}}(f_{S1}^{-1}(k)) 
        \end{aligned}
          \right), \\ 
       & f_{D2}^{-1}\left(\begin{aligned}
        & C\left( f_{S2}^{-1}\left( f_{S1}^{-1}(k-1) \right) \right),C\left( f_{S2}^{-1}\left( f_{S1}^{-1}(k-1)-1 \right) \right),C\left( f_{S2}^{-1}\left( f_{S1}^{-1}(k-1)-2 \right) \right),\cdots ,\\ 
        &C\left( f_{S2}^{-1}\left( f_{S1}^{-1}(k-1)-{{m}_{2}} \right) \right),{{K}_{2}}(f_{S}^{-1}(k-1)) 
      \end{aligned}
        \right), \\ 
       & f_{D2}^{-1}\left(\begin{aligned}
        & C\left( f_{S2}^{-1}\left( f_{S1}^{-1}(k-2) \right) \right),C\left( f_{S2}^{-1}\left( f_{S1}^{-1}(k-2)-1 \right) \right),C\left( f_{S2}^{-1}\left( f_{S1}^{-1}(k-2)-2 \right) \right),\cdots ,\\ 
        &C\left( f_{S2}^{-1}\left( f_{S1}^{-1}(k-2)-{{m}_{2}} \right) \right),{{K}_{2}}(f_{S1}^{-1}(k-2)) 
      \end{aligned}
        \right),\\ 
        &\cdots , \\ 
       & f_{D2}^{-1}\left(\begin{aligned}
        & C\!\left( f_{S2}^{-1}\left( f_{S1}^{-1}(k-{{m}_{1}}) \right) \right)\!,C\!\left( f_{S2}^{-1}\left( f_{S1}^{-1}(k-{{m}_{1}})-1 \right) \right)\!,C\!\left( f_{S2}^{-1}\left( f_{S1}^{-1}(k-{{m}_{1}})-2 \right) \right)\!,\cdots \!,\\ 
        &C\left( f_{S2}^{-1}\left( f_{S1}^{-1}(k-{{m}_{1}})-{{m}_{2}} \right) \right),{{K}_{2}}(f_{S1}^{-1}(k-{{m}_{1}})) 
      \end{aligned}
      \right),\\ 
      &{{K}_{1}}(k) \\ 
      \end{aligned} \right)      
\end{equation}
(4) From  Fig.~\ref{fig7} and Eq.\eqref{eq13}, the iterative equation for the five-stage permutation\allowbreak-diffusion\allowbreak-permutation\allowbreak-diffusion\allowbreak-permutation PNDCC can be derived as
\begin{equation}
    \label{eq13-1}
    {{P}_{1}}\!\left( k \right)=f_{D1}^{-1}\!\left( \begin{aligned}
        & f_{D2}^{-1}\!\left(\begin{aligned}
          & C\!\left( f_{S2}^{-1}\!\left( f_{S1}^{-1}(k) \right) \right),C\!\left( f_{S2}^{-1}\!\left( f_{S1}^{-1}(k)-1 \right) \right),C\!\left( f_{S2}^{-1}\!\left( f_{S1}^{-1}(k)-2 \right) \right),\cdots ,\\ 
          &C\!\left( f_{S2}^{-1}\!\left( f_{S1}^{-1}(k)-{{m}_{2}} \right) \right),{{K}_{2}}(f_{S1}^{-1}(k)) 
        \end{aligned}
          \right), \\ 
       & f_{D2}^{-1}\!\left(\begin{aligned}
        & C\!\left( f_{S2}^{-1}\!\left( f_{S1}^{-1}(k-1) \right) \right),C\!\left( f_{S2}^{-1}\!\left( f_{S1}^{-1}(k-1)-1 \right) \right),C\!\left( f_{S2}^{-1}\!\left( f_{S1}^{-1}(k-1)-2 \right) \right),\cdots ,\\ 
        &C\!\left( f_{S2}^{-1}\!\left( f_{S1}^{-1}(k-1)-{{m}_{2}} \right) \right),{{K}_{2}}(f_{S}^{-1}(k-1)) 
      \end{aligned}
        \right), \\ 
       & f_{D2}^{-1}\!\left(\begin{aligned}
        & C\!\left( f_{S2}^{-1}\!\left( f_{S1}^{-1}(k-2) \right) \right),C\!\left( f_{S2}^{-1}\!\left( f_{S1}^{-1}(k-2)-1 \right) \right),C\!\left( f_{S2}^{-1}\!\left( f_{S1}^{-1}(k-2)-2 \right) \right),\cdots ,\\ 
        &C\!\left( f_{S2}^{-1}\!\left( f_{S1}^{-1}(k-2)-{{m}_{2}} \right) \right),{{K}_{2}}(f_{S1}^{-1}(k-2))
      \end{aligned}
        \right),\\ 
        &\cdots , \\ 
       & f_{D2}^{-1}\!\left(\begin{aligned}
        & C\!\left( f_{S2}^{-1}\!\left( f_{S1}^{-1}(k-{{m}_{1}}) \right) \right),C\!\left( f_{S2}^{-1}\!\left( f_{S1}^{-1}(k-{{m}_{1}})-1 \right) \right),C\!\left( f_{S2}^{-1}\!\left( f_{S1}^{-1}(k-{{m}_{1}})-2 \right) \right),\cdots ,\\ 
        &C\!\left( f_{S2}^{-1}\!\left( f_{S1}^{-1}(k-{{m}_{1}})-{{m}_{2}} \right) \right),{{K}_{2}}(f_{S1}^{-1}(k-{{m}_{1}})) 
      \end{aligned}
        \right),\\ 
        &{{K}_{1}}(k) \\ 
      \end{aligned} \right)      
\end{equation}
Let ${{P}_{1}}(k)=P\left( f_{S0}^{{}}(k) \right)$ in the above equation, then we obtain
\begin{equation}
  \label{eq14}
  \begin{aligned}
    & P\!\left( f_{S0}^{{}}(k) \right)\!=\!f_{D1}^{-1}\!\left( \begin{aligned}
    & f_{D2}^{-1}\!\left(\begin{aligned}
      & C\!\left( f_{S2}^{-1}\!\left( f_{S1}^{-1}(k) \right) \right),C\!\left( f_{S2}^{-1}\!\left( f_{S1}^{-1}(k)-1 \right) \right),C\!\left( f_{S2}^{-1}\!\left( f_{S1}^{-1}(k)-2 \right) \right),\cdots ,\\ 
      &C\!\left( f_{S2}^{-1}\!\left( f_{S1}^{-1}(k)-{{m}_{2}} \right) \right),{{K}_{2}}(f_{S1}^{-1}(k)) 
    \end{aligned}
      \right), \\ 
   & f_{D2}^{-1}\!\left(\begin{aligned}
    &  C\!\left( f_{S2}^{-1}\!\left( f_{S1}^{-1}(k-1) \right) \right),C\!\left( f_{S2}^{-1}\!\left( f_{S1}^{-1}(k-1)-1 \right) \right),C\!\left( f_{S2}^{-1}\!\left( f_{S1}^{-1}(k-1)-2 \right) \right),\cdots ,\\ 
    &C\!\left( f_{S2}^{-1}\!\left( f_{S1}^{-1}(k-1)-{{m}_{2}} \right) \right),{{K}_{2}}(f_{S}^{-1}(k-1)) 
  \end{aligned}
 \right), \\ 
   & f_{D2}^{-1}\!\left(\begin{aligned}
    &  C\!\left( f_{S2}^{-1}\!\left( f_{S1}^{-1}(k-2) \right) \right)\!,C\!\left( f_{S2}^{-1}\!\left( f_{S1}^{-1}(k-2)-1 \right) \right)\!,C\!\left( f_{S2}^{-1}\!\left( f_{S1}^{-1}(k-2)-2 \right) \right)\!,\cdots \!,\\ 
    &C\!\left( f_{S2}^{-1}\!\left( f_{S1}^{-1}(k-2)-{{m}_{2}} \right) \right),{{K}_{2}}(f_{S1}^{-1}(k-2)) 
  \end{aligned}
 \right)\!,\cdots \!, \\ 
   & f_{D2}^{-1}\!\left(\!\begin{aligned}
    &  C\!\left( f_{S2}^{-1}\!\left( f_{S1}^{-1}(k-{{m}_{1}})\! \right) \!\right)\!,C\!\left( f_{S2}^{-1}\!\left( f_{S1}^{-1}(k-{{m}_{1}})-1 \right)\! \right)\!,C\!\left( f_{S2}^{-1}\!\left( f_{S1}^{-1}(k-{{m}_{1}})-2 \right)\! \right)\!,\cdots \!,\\ 
    &C\!\left( f_{S2}^{-1}\!\left( f_{S1}^{-1}(k-{{m}_{1}})-{{m}_{2}} \right) \right),{{K}_{2}}(f_{S1}^{-1}(k-{{m}_{1}})) 
  \end{aligned}\!
 \right)\!,\\ 
 &{{K}_{1}}(k) \\ 
  \end{aligned} \right) \\ 
   & \to P(k)=f_{D1}^{-1}\!\left( \begin{aligned}
    & f_{D2}^{-1}\!\left( \begin{aligned}
    & C\!\left( f_{S2}^{-1}\!\left( f_{S1}^{-1}\!\left( f_{S0}^{-1}\!\left( k \right) \right) \right) \right),C\!\left( f_{S2}^{-1}\!\left( f_{S1}^{-1}\!\left( f_{S0}^{-1}\!\left( k \right) \right)-1 \right) \right),\\ 
    &C\!\left( f_{S2}^{-1}\!\left( f_{S1}^{-1}\!\left( f_{S0}^{-1}\!\left( k \right) \right)-2 \right) \right),\cdots , \\ 
   & C\!\left( f_{S2}^{-1}\!\left( f_{S1}^{-1}\!\left( f_{S0}^{-1}\!\left( k \right) \right)-{{m}_{2}} \right) \right),{{K}_{2}}\!\left( f_{S1}^{-1}\!\left( f_{S0}^{-1}\!\left( k \right) \right) \right) \\ 
  \end{aligned} \right), \\ 
   & f_{D2}^{-1}\!\left( \begin{aligned}
    & C\!\left( f_{S2}^{-1}\!\left( f_{S1}^{-1}\!\left( f_{S0}^{-1}\!\left( k \right)-1 \right) \right) \right),C\!\left( f_{S2}^{-1}\!\left( f_{S1}^{-1}\!\left( f_{S0}^{-1}\!\left( k \right)-1 \right)-1 \right) \right),\\ 
    &C\!\left( f_{S2}^{-1}\!\left( f_{S1}^{-1}\!\left( f_{S0}^{-1}\!\left( k \right)-1 \right)-2 \right) \right),\cdots , \\ 
   & C\!\left( f_{S2}^{-1}\!\left( f_{S1}^{-1}\!\left( f_{S0}^{-1}\!\left( k \right)-1 \right)-{{m}_{2}} \right) \right),{{K}_{2}}\!\left( f_{S}^{-1}\!\left( f_{S0}^{-1}\!\left( k \right)-1 \right) \right) \\ 
  \end{aligned} \right), \\ 
   & f_{D2}^{-1}\!\left( \begin{aligned}
    & C\!\left( f_{S2}^{-1}\!\left( f_{S1}^{-1}\!\left( f_{S0}^{-1}\!\left( k \right)-2 \right) \right) \right),C\!\left( f_{S2}^{-1}\!\left( f_{S1}^{-1}\!\left( f_{S0}^{-1}\!\left( k \right)-2 \right)-1 \right) \right),\\ 
    &C\!\left( f_{S2}^{-1}\!\left( f_{S1}^{-1}\!\left( f_{S0}^{-1}\!\left( k \right)-2 \right)-2 \right) \right),\cdots , \\ 
   & C\!\left( f_{S2}^{-1}\!\left( f_{S1}^{-1}\!\left( f_{S0}^{-1}\!\left( k \right)-2 \right)-{{m}_{2}} \right) \right),{{K}_{2}}\!\left( f_{S1}^{-1}\!\left( f_{S0}^{-1}\!\left( k \right)-2 \right) \right) \\ 
  \end{aligned} \right),\cdots , \\ 
   & f_{D2}^{-1}\!\left( \begin{aligned}
    & C\!\left( f_{S2}^{-1}\!\left( f_{S1}^{-1}\!\left( f_{S0}^{-1}\!\left( k \right)-{{m}_{1}} \right) \right) \right),C\!\left( f_{S2}^{-1}\!\left( f_{S1}^{-1}\!\left( f_{S0}^{-1}\!\left( k \right)-{{m}_{1}} \right)-1 \right) \right), \\ 
   & C\!\left( f_{S2}^{-1}\!\left( f_{S1}^{-1}\!\left( f_{S0}^{-1}\!\left( k \right)-{{m}_{1}} \right)-2 \right) \right),\cdots ,\\ 
   &C\!\left( f_{S2}^{-1}\!\left( f_{S1}^{-1}\!\left( f_{S0}^{-1}\!\left( k \right)-{{m}_{1}} \right)-{{m}_{2}} \right) \right),{{K}_{2}}\!\left( f_{S1}^{-1}\!\left( f_{S0}^{-1}\!\left( k \right)-{{m}_{1}} \right) \right) \\ 
  \end{aligned} \right),{{K}_{1}}\!\left( f_{S0}^{-1}\!\left( k \right) \right) \\ 
  \end{aligned} \right) \\ 
  \end{aligned}  
\end{equation}
The derivations of the iterative equations for the remaining multi-stage PNDCCs proceed similarly and are therefore omitted.

Note that the generalized iterative equation in Eqs.~\eqref{eq11}-\eqref{eq14} is essentially a complete decryption machine: feeding ciphertext at the input yields the corresponding plaintext. Images have two dimensions, intensity and position; “Complete decryption machine” means recovering both dimensions. In chain attacks, one first recovers positional permutations; given these, the diffusion can be further recovered by solving the resulting systems of equations.

\subsection{Associated Position Expression for Multi-Stage PNDCC}
The above analysis shows that a positional decryption machine is crucial for chain attacks. How do we extract it from the complete decryption machine in Eqs.~\eqref{eq11}-\eqref{eq14}? Equivalently, how do we obtain the associated position expression? The method is to retain only the positional mappings and permutations in Eqs.~\eqref{eq11}-\eqref{eq14}, ignoring diffusion.

Based on Eqs.~\eqref{eq11}-\eqref{eq14}, the associated-position expression for the three-stage diffusion–permutation–diffusion PNDCC is given by
\begin{equation}
    \label{eq15}
    \left\{ P\left( k \right) \right\}\leftarrow \left\{ \begin{aligned}
        & C\left( f_{S1}^{-1}(k) \right),C\left( f_{S1}^{-1}(k)-1 \right),C\left( f_{S1}^{-1}(k)-2 \right),\cdots ,C\left( f_{S1}^{-1}(k)-{{m}_{2}} \right), \\ 
       & C\left( f_{S1}^{-1}(k-1) \right),C\left( f_{S1}^{-1}(k-1)-1 \right),C\left( f_{S1}^{-1}(k-1)-2 \right),\cdots ,C\left( f_{S1}^{-1}(k-1)-{{m}_{2}} \right), \\ 
       & C\left( f_{S1}^{-1}(k-2) \right),C\left( f_{S1}^{-1}(k-2)-1 \right),C\left( f_{S1}^{-1}(k-2)-2 \right),\cdots ,C\left( f_{S1}^{-1}(k-2)-{{m}_{2}} \right),\\ 
       &\cdots , \\ 
       & C\left( f_{S1}^{-1}(k-{{m}_{1}}) \right),C\left( f_{S1}^{-1}(k-{{m}_{1}})-1 \right),C\left( f_{S1}^{-1}(k-{{m}_{1}})-2 \right),\cdots ,C\left( f_{S1}^{-1}(k-{{m}_{1}})-{{m}_{2}} \right) \\ 
      \end{aligned} \right\}      
  \end{equation}
Similarly, the associated-position expression for the four-stage permutation–diffusion–permutation–diffusion PNDCC is given by
\begin{equation}
    \label{eq16}
    \left\{ P\left( k \right) \right\}\leftarrow \left\{ \begin{aligned}
        & C\left( f_{S1}^{-1}\left( f_{S0}^{-1}(k) \right) \right),C\left( f_{S1}^{-1}\left( f_{S0}^{-1}(k) \right)-1 \right),C\left( f_{S1}^{-1}\left( f_{S0}^{-1}(k) \right)-2 \right),\cdots ,\\ 
        &C\left( f_{S1}^{-1}\left( f_{S0}^{-1}(k) \right)-{{m}_{2}} \right), \\ 
       & C\left( f_{S1}^{-1}\left( f_{S0}^{-1}(k)-1 \right) \right),C\left( f_{S1}^{-1}\left( f_{S0}^{-1}(k)-1 \right)-1 \right),C\left( f_{S1}^{-1}\left( f_{S0}^{-1}(k)-1 \right)-2 \right),\cdots ,\\ 
       &C\left( f_{S1}^{-1}\left( f_{S0}^{-1}(k)-1 \right)-{{m}_{2}} \right), \\ 
       & C\left( f_{S1}^{-1}\left( f_{S0}^{-1}(k)-2 \right) \right),C\left( f_{S1}^{-1}\left( f_{S0}^{-1}(k)-2 \right)-1 \right),C\left( f_{S1}^{-1}\left( f_{S0}^{-1}(k)-2 \right)-2 \right),\cdots ,\\ 
       &C\left( f_{S1}^{-1}\left( f_{S0}^{-1}(k)-2 \right)-{{m}_{2}} \right),\\ 
       &\cdots , \\ 
       & C\left( f_{S1}^{-1}\left( f_{S0}^{-1}(k)-{{m}_{1}} \right) \right),C\left( f_{S1}^{-1}\left( f_{S0}^{-1}(k)-{{m}_{1}} \right)-1 \right),C\left( f_{S1}^{-1}\left( f_{S0}^{-1}(k)-{{m}_{1}} \right)-2 \right),\cdots ,\\ 
       &C\left( f_{S1}^{-1}\left( f_{S0}^{-1}(k)-{{m}_{1}} \right)-{{m}_{2}} \right) \\ 
      \end{aligned} \right\}      
  \end{equation}
Likewise, the associated-position expression for the four-stage diffusion–permutation–diffusion–permutation PNDCC is given by
\begin{equation}
    \label{eq17}
    \left\{ P\left( k \right) \right\}\leftarrow \left\{ \begin{aligned}
        & C\left( f_{S2}^{-1}\left( f_{S1}^{-1}(k) \right) \right),C\left( f_{S2}^{-1}\left( f_{S1}^{-1}(k)-1 \right) \right),C\left( f_{S2}^{-1}\left( f_{S1}^{-1}(k)-2 \right) \right),\cdots ,\\ 
        &C\left( f_{S2}^{-1}\left( f_{S1}^{-1}(k)-{{m}_{2}} \right) \right), \\ 
       & C\left( f_{S2}^{-1}\left( f_{S1}^{-1}(k-1) \right) \right),C\left( f_{S2}^{-1}\left( f_{S1}^{-1}(k-1)-1 \right) \right),C\left( f_{S2}^{-1}\left( f_{S1}^{-1}(k-1)-2 \right) \right),\cdots ,\\ 
       &C\left( f_{S2}^{-1}\left( f_{S1}^{-1}(k-1)-{{m}_{2}} \right) \right), \\ 
       & C\left( f_{S2}^{-1}\left( f_{S1}^{-1}(k-2) \right) \right),C\left( f_{S2}^{-1}\left( f_{S1}^{-1}(k-2)-1 \right) \right),C\left( f_{S2}^{-1}\left( f_{S1}^{-1}(k-2)-2 \right) \right),\cdots ,\\ 
       &C\left( f_{S2}^{-1}\left( f_{S1}^{-1}(k-2)-{{m}_{2}} \right) \right),\\ 
       &\cdots , \\ 
       & C\left( f_{S2}^{-1}\left( f_{S1}^{-1}(k-{{m}_{1}}) \right) \right),C\left( f_{S2}^{-1}\left( f_{S1}^{-1}(k-{{m}_{1}})-1 \right) \right),C\left( f_{S2}^{-1}\left( f_{S1}^{-1}(k-{{m}_{1}})-2 \right) \right),\cdots ,\\ 
       &C\left( f_{S2}^{-1}\left( f_{S1}^{-1}(k-{{m}_{1}})-{{m}_{2}} \right) \right) \\ 
      \end{aligned} \right\}      
  \end{equation}
Finally, the associated-position expression for the five-stage permutation–diffusion–permutation–diffusion–permutation PNDCC is given by
\begin{equation}
  \label{eq18}
  \left\{ P(k) \right\}\!\leftarrow\! \left\{ \begin{aligned}
    & C\!\left( f_{S2}^{-\!1}\!\left( f_{S1}^{-\!1}\!\left( f_{S0}^{-\!1}\!\left( k \right) \right) \right) \right),C\!\left( f_{S2}^{-\!1}\!\left( f_{S1}^{-\!1}\!\left( f_{S0}^{-\!1}\!\left( k \right) \right)\!-\!1 \right) \right),C\!\left( f_{S2}^{-\!1}\!\left( f_{S1}^{-\!1}\!\left( f_{S0}^{-\!1}\!\left( k \right) \right)\!-\!2 \right) \right),\cdots ,\\ 
    &C\!\left( f_{S2}^{-\!1}\!\left( f_{S1}^{-\!1}\!\left( f_{S0}^{-\!1}\!\left( k \right) \right)\!-\!{{m}_{2}} \right) \right), \\ 
   & C\!\left( f_{S2}^{-\!1}\!\left( f_{S1}^{-\!1}\!\left( f_{S0}^{-\!1}\!\left( k \right)\!-\!1 \right) \right) \right),C\!\left( f_{S2}^{-\!1}\!\left( f_{S1}^{-\!1}\!\left( f_{S0}^{\!-\!1}\!\left( k \right)\!-\!1 \right)\!-\!1 \right) \right),C\!\left( f_{S2}^{-\!1}\!\left( f_{S1}^{-\!1}\!\left( f_{S0}^{-\!1}\!\left( k \right)\!-\!1 \right)\!-\!2 \right) \right),\cdots ,\\ 
   &C\!\left( f_{S2}^{-\!1}\!\left( f_{S1}^{-\!1}\!\left( f_{S0}^{-\!1}\!\left( k \right)\!-\!1 \right)\!-\!{{m}_{2}} \right) \right), \\ 
   & C\!\left( f_{S2}^{-\!1}\!\left( f_{S1}^{-\!1}\!\left( f_{S0}^{-\!1}\!\left( k \right)\!-\!2 \right) \right) \right),C\!\left( f_{S2}^{-\!1}\!\left( f_{S1}^{-\!1}\!\left( f_{S0}^{-\!1}\!\left( k \right)\!-\!2 \right)\!-\!1 \right) \right),C\!\left( f_{S2}^{-\!1}\!\left( f_{S1}^{-\!1}\!\left( f_{S0}^{-\!1}\!\left( k \right)\!-\!2 \right)\!-\!2 \right) \right),\cdots ,\\ 
   &C\!\left( f_{S2}^{-\!1}\!\left( f_{S1}^{-\!1}\!\left( f_{S0}^{-\!1}\!\left( k \right)\!-\!2 \right)\!-\!{{m}_{2}} \right) \right), \\ 
   &\cdots ,\\ 
   & C\!\left( f_{S2}^{-\!1}\!\left( f_{S1}^{-\!1}\!\left( f_{S0}^{-\!1}\!\left( k \right)\!-\!{{m}_{1}} \right) \right) \right),C\!\left( f_{S2}^{-\!1}\!\left( f_{S1}^{-\!1}\!\left( f_{S0}^{-\!1}\!\left( k \right)\!-\!{{m}_{1}} \right)\!-\!1 \right) \right),C\!\left( f_{S2}^{-\!1}\!\left( f_{S1}^{-\!1}\!\left( f_{S0}^{-\!1}\!\left( k \right)\!-\!{{m}_{1}} \right)\!-\!2 \right) \right),\\ 
   &\cdots ,C\!\left( f_{S2}^{-\!1}\!\left( f_{S1}^{-\!1}\!\left( f_{S0}^{-\!1}\!\left( k \right)\!-\!{{m}_{1}} \right)\!-\!{{m}_{2}} \right) \right) \\ 
  \end{aligned} \right\}  
\end{equation}

By the same method, one can derive the associated position expressions for an arbitrary number of stages; due to space limits, they are omitted here. The left arrow indicates that plaintext positions are determined from ciphertext via chosen-ciphertext attacks. According to Eq.~\eqref{eq15}-\eqref{eq18}, an attacker can temporarily obtain the positional decryption capability: inject a perturbation at each ciphertext position in turn, and record the perturbation responses on the plaintext side, thereby determining all associated positions between plaintext and ciphertext.

\subsection{Importance of the Generalized Iterative Equation and Associated Position Expression for Multi-Stage PNDCC}
The importance of the generalized iterative equation and the associated position expression for multi-stage PNDCC is mainly reflected in the following aspects:

(1) It must be emphasized that, since multi-stage PNDCC contains ciphertext-feedback terms but no plaintext-delay terms, the closed forms in Eqs.~\eqref{eq14}-\eqref{eq18} can be obtained only under this specific prerequisite; otherwise, closed-form expressions of the type shown in Eqs.~\eqref{eq14}-\eqref{eq18} cannot be derived.

(2) From an attack perspective, Eqs.~\eqref{eq14}-\eqref{eq18} justify first recovering all permutations and then all diffusions, thereby providing a theoretical basis for chain attacks. Using chain attacks to break the multi-stage PNDCC in Fig.~\ref{fig7} 
is comparatively simple and of low complexity, making it more efficient than many conventional cryptanalytic methods.

(3) According to Eq.~\eqref{eq18}, one can theoretically prove the existence of a complete chain of length $M\times N$, which establishes the prerequisite for the chain attack introduced in Sec.~\ref{sec:chained-attack}.

\section{Chain Attack on Multi-Stage PNDCC and Its Implementation}
\label{sec:chained-attack}
The core principle of the chain attack is that the pixel positions obtained by chosen-ciphertext attack are permuted positions, so the absolute positions $k=1,2,\ldots,M\times N$ cannot be obtained directly. However, if one can find a chain head and a complete chain of length $M\times N$, then starting from the chain head and traversing node by node yields an order that exactly matches the absolute positions $k=1,2,\ldots,M\times N$. Only when both the absolute positions and the permuted positions are obtained and compared can the permutation key be recovered. Conversely, if the chain head cannot be found and a complete chain of length $M\times N$ cannot be formed, the absolute positions cannot be determined, and the permutation key cannot be broken.

\subsection{Basic Concepts of the Chain}
In Sec.~\ref{sec:general-iter-eq}, we derived the general iterative form of the “positional decryption machine” for 5-stage PNDCC, given by Eq.~\eqref{eq18}. The attacker’s goal is to temporarily obtain the positional decryption capability provided by Eq.~\eqref{eq18}. Based on this, under a chosen-ciphertext attack, the attacker injects a perturbation at each ciphertext terminal in turn and records the perturbation responses on the plaintext side, thereby establishing the positional correspondence between the plaintext set and the ciphertext set. Finally, the permutation key is recovered using the chain attack.

However, a prerequisite for recovering the permutation key via the chain attack is that Eq.~\eqref{eq18} must contain a complete chain of length $M\times N$, hereafter simply referred to as a chain. If one can prove from Eq.~\eqref{eq18} that such a chain indeed exists, then the attacker can locate this chain via a chosen-ciphertext attack and thus recover the corresponding permutation key. Under this premise, we first need to prove theoretically that Eq.~\eqref{eq18} contains a complete chain of length $M\times N$.

\begin{definition}[Chain]
  \label{def1}
  A chain consists of multiple nodes, each representing a set. If the sets of two adjacent nodes have a nonempty intersection, then there exists a link between them, forming a chain. The chain head is the starting node, the chain tail is the terminal node, and the nodes between the head and tail are intermediate nodes.
  \end{definition}

Note that the original Def.~\ref{def1} of a chain is a basic definition and does not provide a concrete construction method. In practice, a chain can be constructed by taking intersections. For example, for a complete chain of length $M\times N$, every element except those in the last set corresponding to the chain tail is included by at least one intersection.

\begin{theorem}[Existence Theorem of the Chain]
  \label{th1}
  For the positional correspondence between the plaintext set and the ciphertext set represented by Eq.~\eqref{eq18}, there must exist a complete chain of length $M\times N$.
  \end{theorem}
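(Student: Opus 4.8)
The plan is to produce one explicit ordering of the $M\times N$ nodes of Eq.~\eqref{eq18} and to check that every two consecutive nodes in that ordering share a ciphertext position, so that the ordered sequence is a complete chain in the sense of Def.~\ref{def1}. The decisive idea is that the ordering must follow the intermediate positions lying immediately before the front diffusion, \emph{not} the plaintext index $k$: for $j=1,2,\ldots,M\times N$ I let node $j$ be the node attached to plaintext position $k_j:=f_{S0}(j)$ in Eq.~\eqref{eq18}. Since $f_{S0}$ is a permutation, $j\mapsto k_j$ is a bijection, so nodes $1,\ldots,M\times N$ run through all $M\times N$ nodes of Eq.~\eqref{eq18} exactly once.

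First I would put node $j$ in reduced form. Write $R(\ell):=\{\,f_{S2}^{-1}(f_{S1}^{-1}(\ell)-b):0\le b\le m_2,\ 1\le f_{S1}^{-1}(\ell)-b\le M\times N\,\}$ for the set of ciphertext positions on which the stages $f_{S2}^{-1}$, $f_{D2}^{-1}$ (of memory $m_2$), $f_{S1}^{-1}$ that precede $f_{D1}^{-1}$ in decryption make position $\ell$ depend. Substituting $f_{S0}^{-1}(k_j)=j$ into Eq.~\eqref{eq18} then gives node $j=\bigcup_{a=0}^{m_1}R(j-a)$, with the convention that a summand with $j-a\le 0$ is dropped because it corresponds to the fixed initial conditions rather than to ciphertext samples. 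The only property of $R$ I need is that $R(\ell)\ne\emptyset$ for $1\le\ell\le M\times N$, which holds because the choice $b=0$ always contributes the legitimate ciphertext position $f_{S2}^{-1}(f_{S1}^{-1}(\ell))$.

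Next I would verify the links. Fix $j$ with $1\le j\le M\times N-1$. Because $m_1\ge 1$ (the front diffusion carries at least one ciphertext-feedback term), the subset $R(j)$ occurs as the $a=0$ summand of node $j$ and as the $a=1$ summand of node $j+1$; hence $R(j)$ lies in both node $j$ and node $j+1$, and since $R(j)\ne\emptyset$ these two nodes have nonempty intersection. By Def.~\ref{def1} the sequence $1\to 2\to\cdots\to M\times N$ is therefore a chain; it contains all $M\times N$ nodes, so it is a complete chain of length $M\times N$, with chain head node $1$, chain tail node $M\times N$, and intermediate nodes $2,\ldots,M\times N-1$. The same argument, applied to the front permutation and front diffusion of any multi-stage PNDCC of the form in Fig.~\ref{fig7} with $R(\cdot)$ redefined as the composite position map of all stages preceding $f_{D1}^{-1}$ in decryption, proves the statement for an arbitrary number of stages, since that composite is still a nonempty position map and the front diffusion still has memory $m_1\ge 1$.

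I expect the main obstacle to be conceptual rather than technical: at first sight the claim seems to depend delicately on the particular permutations $f_{S0},f_{S1},f_{S2}$---why should some ordering of $M\times N$ essentially arbitrary subsets form a Hamiltonian-type chain?---and the real content is the realization that one fixed, permutation-independent ordering always works, precisely because the delay $m_1\ge 1$ forces consecutive influence sets to overlap. After that, only boundary bookkeeping remains: checking that discarding the initial-condition summands (for $j\le m_1$, and whenever $f_{S1}^{-1}(\ell)-b$ leaves $\{1,\ldots,M\times N\}$) never removes the linking set $R(j)$, and that each node is nonempty; both reduce to the admissibility of the $b=0$, $a\in\{0,1\}$ choices. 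Finally, the covering remark after Def.~\ref{def1} follows because $\ell\mapsto f_{S2}^{-1}(f_{S1}^{-1}(\ell))$ is itself a bijection, so the linking sets $R(1),\ldots,R(M\times N-1)$ already meet every ciphertext position except possibly $f_{S2}^{-1}(f_{S1}^{-1}(M\times N))$, which belongs to the chain tail, node $M\times N$.
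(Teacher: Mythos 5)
Your proposal is correct and takes essentially the same route as the paper: the paper's own proof likewise reindexes the nodes by the intermediate position via $k\mapsto f_{S0}(k)$ (its Eq.~\eqref{eq19} is exactly your node $j=R(j)\cup R(j-1)$ in the case $m_1=m_2=1$) and then observes that consecutive nodes share the nonempty block $R(j)$, so the sequence is a complete chain. The only difference is one of presentation: the paper carries the argument out concretely for $m_1=m_2=1$ and appeals to a figure, asserting that the general case follows, whereas you give the argument for arbitrary $m_1,m_2\ge 1$ with the boundary and initial-condition bookkeeping made explicit.
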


  \begin{proof}
    For brevity, we prove the claim using the associated position expression in Eq.~\eqref{eq18} for the five-stage permutation-diffusion-permutation-diffusion-permutation PNDCC, and let $m_{1}=m_{2}=1$. It suffices to expand Eq.~\eqref{eq18} for $k=1,2,\ldots,M\times N$ to observe a complete chain of length $M\times N$. Note that this proof approach also holds in the general case. In fact, setting $m_{1}=m_{2}=1$ in Eq.~\eqref{eq18} and transforming yields
\begin{equation}
  \label{eq19}
\{ P\!\left( f_{S0}^{{}}\left( k \right) \right) \} \!\leftarrow\! \left\{\!
   C\!\left( f_{S2}^{-1}\left( f_{S1}^{-1}(k) \right) \right)\!,C\!\left( f_{S2}^{-1}\left( f_{S1}^{-1}(k)-1 \right) \right)\!,C\!\left( f_{S2}^{-1}\left( f_{S1}^{-1}(k-1) \right) \right)\!,C\!\left( f_{S2}^{-1}\left( f_{S1}^{-1}(k-1)-1\right) \right)    
   \!\right\}.
\end{equation}

Expanding Eq.~\eqref{eq19} for $k=1,2,\ldots,M\times N$, we find that, except for the elements $C(f_{S2}^{-1}(f_{S1}^{-1}(M\times N))),C(f_{S2}^{-1}(f_{S1}^{-1}(M\times N)-1))$ in the last set corresponding to the chain tail, all other elements in the chain are included by at least one intersection. Thus, we obtain a complete chain of length $M\times N$, as illustrated in Fig.~\ref{fig8}.
\begin{figure}[htbp] 
  \centering 
  \includegraphics[width=0.99\textwidth]{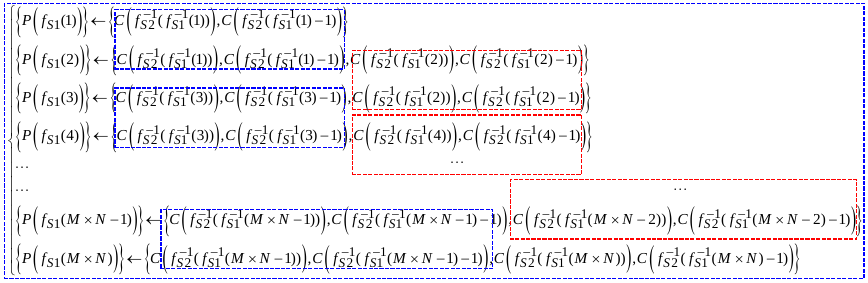} 
  \caption{Schematic of a complete chain of length $M\times N$ derived from Eq.~\eqref{eq19}.} 
  \label{fig8} 
\end{figure}
\end{proof}

\subsection{Basic Method of the Chain Attack}
In this section, we continue to use the five-stage permutation\allowbreak-diffusion\allowbreak-permutation\allowbreak-diffusion\allowbreak-permutation PNDCC as an example to illustrate the basic method of the chain attack, which generalizes to other multi-stage cases without loss of generality. Decrypting associated positions via the positional decryption machine means directly computing the associated positions using Eq.~\eqref{eq18}; by Thm.~\ref{th1}, one can find a complete chain of length $M\times N$, and this process is termed decryption. By contrast, when the attacker temporarily obtains the positional decryption capability of Eq.~\eqref{eq18} and, under a chosen-ciphertext attack, injects a perturbation at each ciphertext terminal in turn and records the corresponding perturbation responses on the plaintext side to establish the positional association between plaintext and ciphertext pixels, this process is termed cryptanalysis.

(1) Basic principle of decryption

For the five-stage PNDCC, set $m_{1}=m_{2}=1$ in Eq.~\eqref{eq14}, yielding
\begin{equation}
  \label{eq20}
P(k)\!=\!f_{D1}^{-1}\left(\! \begin{aligned}
  & f_{D2}^{-1}\left(
 C\left( f_{S2}^{-1}\left( f_{S1}^{-1}\left( f_{S0}^{-1}\left( k \right) \right) \right) \right),C\left( f_{S2}^{-1}\left( f_{S1}^{-1}\left( f_{S0}^{-1}\left( k \right) \right)-1 \right) \right),{{K}_{2}}\left( f_{S1}^{-1}\left( f_{S0}^{-1}\left( k \right) \right) \right) \right), \\ 
 & f_{D2}^{-1}\left(
 C\left( f_{S2}^{-1}\left( f_{S1}^{-1}\left( f_{S0}^{-1}\left( k \right)-1 \right) \right) \right)\!,C\left( f_{S2}^{-1}\left( f_{S1}^{-1}\left( f_{S0}^{-1}\left( k \right)-1 \right)-1 \right) \right)\!,{{K}_{2}}\left( f_{S1}^{-1}\left( f_{S0}^{-1}\left( k \right)-1 \right) \right)  \right)\!,\\ 
&{{K}_{1}}\left( f_{S0}^{-1}\left( k \right) \right) \\ 
\end{aligned} \!\right)\!,
\end{equation}
where $k=1,2,\ldots,M\times N$, and $M\times N$ is the image size. For simplicity without loss of generality, let $M\times N=9$, the plaintext be $P(1),P(2),\ldots,P(9)$, the ciphertext be $C(1),C(2),\ldots,C(9)$, and set the pre-, mid-, and post-permutations in Eq.~\eqref{eq20} as
\begin{equation}
  \label{eq21}
  \left\{ \begin{aligned}
  & f_{S0}^{-1}:(1,4),(2,5),(3,2),(4,9),(5,8),(6,7),(7,6),(8,3),(9,1) \\ 
 & f_{S1}^{-1}:(1,9),(2,2),(3,4),(4,5),(5,1),(6,6),(7,3),(8,8),(9,7) \\ 
 & f_{S2}^{-1}:(1,3),(2,6),(3,7),(4,8),(5,2),(6,1),(7,9),(8,4),(9,5) \\ 
\end{aligned} \right..
\end{equation}
In addition, the general form of the pre-diffusion iteration in Eq.~\eqref{eq20} is
\begin{equation}
  \label{eq22}
{{P}_{1}}\left( k \right)=f_{D1}^{-1}\left( {{C}_{1}}\left( k \right),{{C}_{1}}\left( k-1 \right),{{K}_{1}}(k) \right),
\end{equation}
with the corresponding pre-diffusion positional decryption machine
\begin{equation}
  \label{eq23}
\begin{aligned}
  & \left\{ \begin{aligned}
  & \left\{ {{P}_{1}}\left( k \right) \right\}\leftarrow \left\{ {{C}_{1}}\left( k \right),{{C}_{1}}\left( k-1 \right) \right\} \\ 
 & \left\{ {{P}_{1}}\left( k+1 \right) \right\}\leftarrow \left\{ {{C}_{1}}\left( k+1 \right),{{C}_{1}}\left( k \right) \right\} \\ 
\end{aligned} \right.  \to \left\{ {{P}_{1}}\left( k \right),{{P}_{1}}\left( k+1 \right) \right\}\leftarrow \left\{ {{C}_{1}}\left( k \right) \right\} .
\end{aligned}
\end{equation}
The general form of the post-diffusion iteration is
\begin{equation}
  \label{eq24}
{{P}_{2}}\left( k \right)=f_{D2}^{-1}\left( {{C}_{2}}\left( k \right),{{C}_{2}}\left( k-1 \right),{{K}_{2}}(k) \right),
\end{equation}
with the corresponding post-diffusion positional decryption machine
\begin{equation}
  \label{eq25}
\begin{aligned}
  & \left\{ \begin{aligned}
  & \left\{ {{P}_{2}}\left( k \right) \right\}\leftarrow \left\{ {{C}_{2}}\left( k \right),{{C}_{2}}\left( k-1 \right) \right\} \\ 
 & \left\{ {{P}_{2}}\left( k+1 \right) \right\}\leftarrow \left\{ {{C}_{2}}\left( k+1 \right),{{C}_{2}}\left( k \right) \right\} \\ 
\end{aligned} \right.  \to \left\{ {{P}_{2}}\left( k \right),{{P}_{2}}\left( k+1 \right) \right\}\leftarrow \left\{ {{C}_{2}}\left( k \right) \right\} .
\end{aligned}
\end{equation}

From Eqs.~\eqref{eq20}-\eqref{eq25}, the decryption result is as shown in Fig.~\ref{fig9}. It can be seen that the decryption machine forms a complete chain of length 9, from which the following six linking rules can be summarized:

1) 
When a set of four ciphertext elements is adjacent to another set of four ciphertext elements, their intersection is allowed to contain only two ciphertext elements.

2) 
When a set of three ciphertext elements is adjacent to a set of four ciphertext elements, their intersection is allowed to contain only two ciphertext elements.

3) 
When a set of two ciphertext elements is adjacent to a set of four ciphertext elements, their intersection is allowed to contain only two ciphertext elements.

4) 
When a set of two ciphertext elements is adjacent to a set of three ciphertext elements, their intersection may contain either two ciphertext elements or only one; the correct case must be confirmed by exhaustive trial and error.

5) 
When a set of three ciphertext elements is adjacent to another set of three ciphertext elements, there are two possible outcomes for their intersection: it may contain only one ciphertext element or only two. Which is correct must be determined by trial and error.

6) 
The last position cannot be handled by taking intersections with a subsequent item and must be discussed separately. If the last plaintext-ciphertext association corresponds to four ciphertext elements, the correct last position can be obtained directly by taking intersections and set differences. If it corresponds to two or three ciphertext elements, exhaustive trial and error is required.

\begin{figure}[htbp] 
  \centering 
  \includegraphics[width=0.45\textwidth]{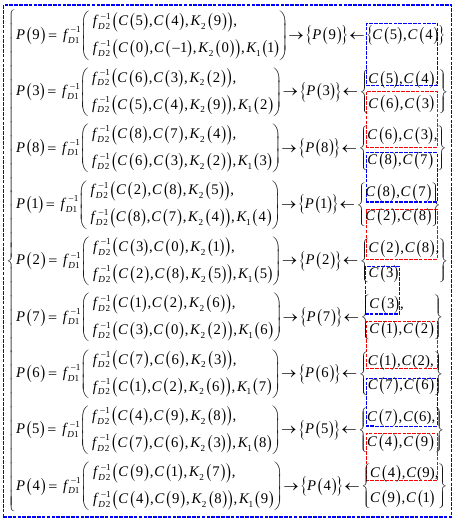} 
  \caption{A complete chain of length 9 formed by the decryption machine.} 
  \label{fig9} 
\end{figure}

(2) Basic method for recovering the pre-permutation $f_{S0}^{-1}$ via the chain attack

From Eqs.~\eqref{eq20}-\eqref{eq25}, the corresponding decryption-machine block diagram is shown in Fig.~\ref{fig10}. Its main features are as follows:

1) 
After the ciphertext $C$ passes through the post-permutation $f_{S2}^{-1}$ and post-diffusion $f_{D2}^{-1}$, a complete chain of length 9 is formed at  $P_{2}^{{}}$, and within this chain, the number of ciphertext elements inside each dashed circle is 2.

2) 
The complete chain at $P_{2}^{{}}$ is broken at $C_{1}$ after the mid-permutation $f_{S1}^{-1}$. In other words, if we can recover $f_{S1}^{-1}$, we can reconnect the broken chain at $C_{1}$ into a complete chain of length 9.

3) 
The broken chain at $C_{1}$ becomes a complete chain of length 9 again at $P_{1}$ after the pre-diffusion $f_{D1}^{-1}$, and within this chain the number of ciphertext elements inside each dashed circle is typically 4.

4) 
The complete chain at $P_{1}$ is broken at the output node $P$ by the pre-permutation $f_{S0}^{-1}$. In other words, if we can recover $f_{S0}^{-1}$, we can reconnect the broken chain at $P$ into a complete chain of length 9.

5) 
Each diffusion creates a complete chain; each permutation breaks the complete chain. In the chain formed after post-diffusion, the dashed circles contain 2 ciphertext elements, but after one more pre-diffusion, the number doubles to typically 4. According to this rule, when attacking the mid-permutation, we must reduce the number of ciphertext elements inside dashed circles from 4 to 2 by taking unions over plaintext and intersections over ciphertext. This will be detailed later.

\begin{figure*}[htbp] 
  \centering 
  \includegraphics[width=0.9\textwidth]{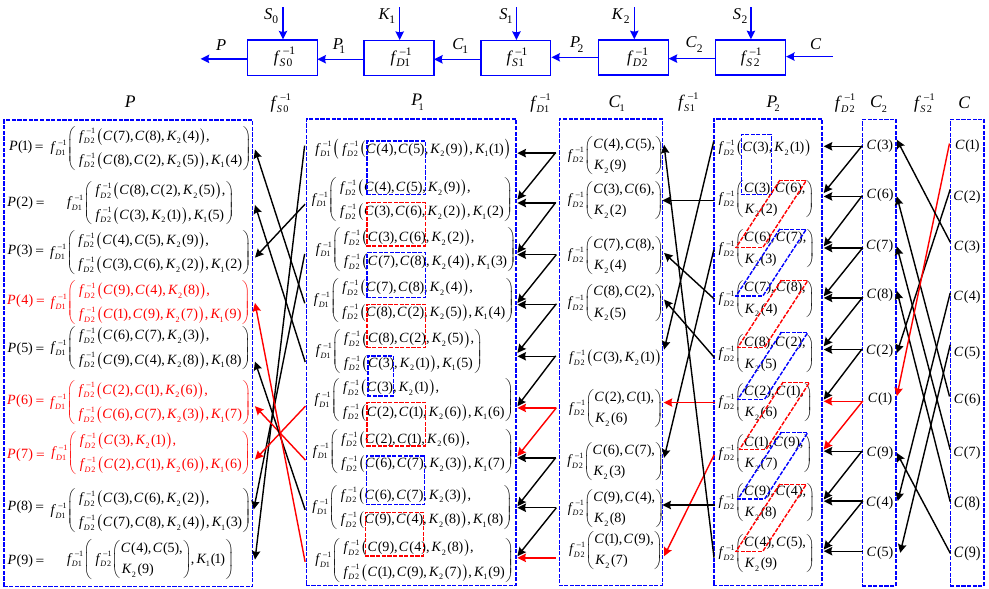} 
  \caption{Block diagram of the decryption machine and illustration of the change in $\{P(4),P(6),P(7)\}$ caused by a perturbation injected at $\left\{ C(1) \right\}$ under chosen-ciphertext attack.} 
  \label{fig10} 
\end{figure*}

Under a chosen-ciphertext attack, feed a monochromatic ciphertext $C^{(0)}=V_{1}$ into the decryption machine in Fig.~\ref{fig10} to obtain the corresponding plaintext $P^{(0)}$. Similarly, input the $M\times N$ basis images ${{e}^{(i)}}$ (${{e}^{(i)}}$ has $V_{2}$ at position $i$ and $V_{1}$ elsewhere) into the decryption machine to obtain all $P^{(i)} (i=1,2,\ldots,M\times N)$. Let $\Delta {{P}^{(i)}}\text{=}{{P}^{(i)}}-{{P}^{(0)}}(i=1,2,\cdots ,M\times N)$, where $M\times N=9$, and record the positions of nonzero elements in $\Delta {{P}^{(i)}}$. Then
\begin{equation}
  \label{eq26}
\left\{ \begin{aligned}
  & \Delta {{P}^{(1)}}(4)\ne 0,\Delta {{P}^{(1)}}(6)\ne 0,\Delta {{P}^{(1)}}(7)\ne 0 \\ 
 & \Delta {{P}^{(1)}}(1)\ne 0,\Delta {{P}^{(2)}}(2)\ne 0,\Delta {{P}^{(2)}}(6)\ne 0,\Delta {{P}^{(2)}}(7)\ne 0 \\ 
 & \Delta {{P}^{(3)}}(2)\ne 0,\Delta {{P}^{(3)}}(3)\ne 0,\Delta {{P}^{(3)}}(7)\ne 0,\Delta {{P}^{(3)}}(8)\ne 0 \\ 
 & \Delta {{P}^{(4)}}(3)\ne 0,\Delta {{P}^{(4)}}(4)\ne 0,\Delta {{P}^{(4)}}(5)\ne 0,\Delta {{P}^{(4)}}(9)\ne 0 \\ 
 & \Delta {{P}^{(6)}}(3)\ne 0,\Delta {{P}^{(6)}}(9)\ne 0 \\ 
 & \Delta {{P}^{(5)}}(3)\ne 0,\Delta {{P}^{(5)}}(5)\ne 0,\Delta {{P}^{(5)}}(6)\ne 0,\Delta {{P}^{(5)}}(8)\ne 0 \\ 
 & \Delta {{P}^{(7)}}(1)\ne 0,\Delta {{P}^{(5)}}(5)\ne 0,\Delta {{P}^{(5)}}(6)\ne 0,\Delta {{P}^{(5)}}(8)\ne 0 \\ 
 & \Delta {{P}^{(8)}}(1)\ne 0,\Delta {{P}^{(8)}}(2)\ne 0,\Delta {{P}^{(8)}}(8)\ne 0 \\ 
 & \Delta {{P}^{(9)}}(4)\ne 0,\Delta {{P}^{(9)}}(5)\ne 0 \\ 
\end{aligned} \right..
\end{equation}
From Eq.~\eqref{eq26} and Tab.~\ref{tab4}, we obtain
\begin{equation}
  \label{eq27}
  \left\{ \begin{aligned}
    & \left\{ C(1) \right\}\to \left\{ P(4),P(6),P(7) \right\} \\ 
   & \left\{ C(2) \right\}\to \left\{ P(1),P(2),P(6),P(7) \right\} \\ 
   & \left\{ C(3) \right\}\to \left\{ P(2),P(3),P(7),P(8) \right\} \\ 
   & \left\{ C(4) \right\}\to \left\{ P(3),P(4),P(5),P(9) \right\} \\ 
   & \left\{ C(5) \right\}\to \left\{ P(3),P(9) \right\} \\ 
   & \left\{ C(6) \right\}\to \left\{ P(3),P(5),P(6),P(8) \right\} \\ 
   & \left\{ C(7) \right\}\to \left\{ P(1),P(5),P(6),P(8) \right\} \\ 
   & \left\{ C(8) \right\}\to \left\{ P(1),P(2),P(8) \right\} \\ 
   & \left\{ C(9) \right\}\to \left\{ P(4),P(5) \right\} \\ 
  \end{aligned} \right.\xrightarrow{\text{Table 4}}\left\{ \begin{aligned}
    & \left\{ P(1) \right\}\leftarrow \left\{ C(2),C(7),C(8) \right\} \\ 
   & \left\{ P(2) \right\}\leftarrow \left\{ C(2),C(3),C(8) \right\} \\ 
   & \left\{ P(3) \right\}\leftarrow \left\{ C(3),C(4),C(5),C(6) \right\} \\ 
   & \left\{ P(4) \right\}\leftarrow \left\{ C(1),C(4),C(9) \right\} \\ 
   & \left\{ P(5) \right\}\leftarrow \left\{ C(4),C(6),C(7),C(9) \right\} \\ 
   & \left\{ P(6) \right\}\leftarrow \left\{ C(1),C(2),C(6),C(7) \right\} \\ 
   & \left\{ P(7) \right\}\leftarrow \left\{ C(1),C(2),C(3) \right\} \\ 
   & \left\{ P(8) \right\}\leftarrow \left\{ C(3),C(6),C(7),C(8) \right\} \\ 
   & \left\{ P(9) \right\}\leftarrow \left\{ C(4),C(5) \right\} \\ 
  \end{aligned} \right.  
\end{equation}
\begin{table}[!h]
  \centering
  \caption{Converting the mapping from ciphertext sets to plaintext sets in Eq.~\eqref{eq27} into a mapping from plaintext sets to ciphertext sets}
  \label{tab4}
\begin{tabular}{cccccccccc}
    \hline
         & $C(1)$ & $C(2)$ & $C(3)$ & $C(4)$ & $C(5)$ & $C(6)$ & $C(7)$ & $C(8)$ & $C(9)$ \\ \hline
    $P(1)$ &      & $\checkmark$    &      &      &      &      & $\checkmark$    & $\checkmark$    &      \\ 
    $P(2)$ &      & $\checkmark$    & $\checkmark$    &      &      &      &      & $\checkmark$    &      \\ 
    $P(3)$ &      &      & $\checkmark$    & $\checkmark$    & $\checkmark$    & $\checkmark$    &      &      &      \\ 
    $P(4)$ & $\checkmark$    &      &      & $\checkmark$    &      &      &      &      & $\checkmark$    \\ 
    $P(5)$ &      &      &      & $\checkmark$    &      & $\checkmark$    & $\checkmark$    &      & $\checkmark$    \\ 
    $P(6)$ & $\checkmark$    & $\checkmark$    &      &      &      & $\checkmark$    & $\checkmark$    &      &      \\ 
    $P(7)$ & $\checkmark$    & $\checkmark$    & $\checkmark$    &      &      &      &      &      &      \\ 
    $P(8)$ &      &      & $\checkmark$    &      &      & $\checkmark$    & $\checkmark$    & $\checkmark$    &      \\ 
    $P(9)$ &      &      &      & $\checkmark$    & $\checkmark$    &      &      &      &      \\ \hline
    \end{tabular}
  \end{table}

  Although Fig.~\ref{fig9} shows that the positional association itself forms a complete chain of length 9, Eq.~\eqref{eq27} comprises associated positions obtained only via chosen-ciphertext attack and is not yet a complete chain of length 9. The next step is to obtain such a chain by reordering according to the six linking rules above, thereby recovering the permutation. Applying the six rules to reorder Eq.~\eqref{eq27} yields the result in Fig.~\ref{fig11}, which clearly forms a complete chain of length 9.
  \begin{figure}[!h] 
    \centering 
    \includegraphics[width=0.65\textwidth]{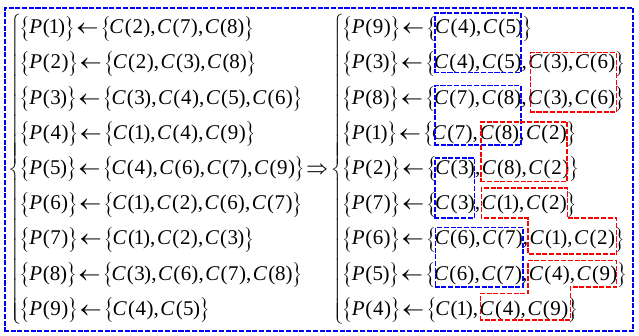} 
    \caption{A complete chain of length 9 obtained by reordering Eq.~\eqref{eq27} using the six linking rules.} 
    \label{fig11} 
  \end{figure}

  From Fig.~\ref{fig11}, the relation between plaintext positions and serial indices is shown in Fig.~\ref{fig12}. Reordering by plaintext serial index yields Fig.~\ref{fig13}. From Fig.~\ref{fig13}, it is evident that we have recovered $$f_{S0}^{-1}:(1,4),(2,5),(3,2),(4,9),(5,8),(6,7),(7,6),(8,3),(9,1).$$
  \begin{figure}[!h] 
    \centering 
    \includegraphics[width=0.58\textwidth]{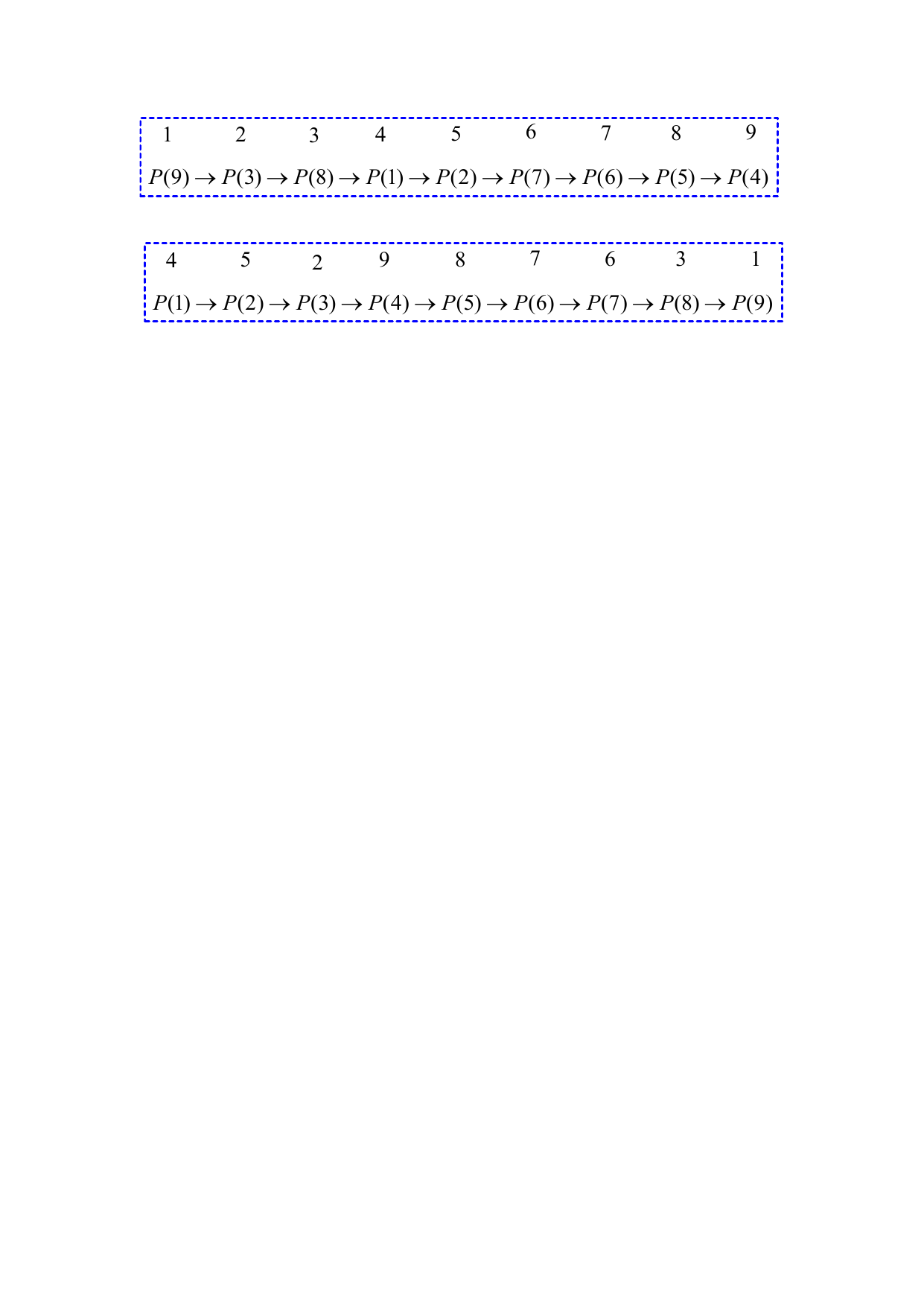} 
    \caption{Relationship between plaintext positions and serial indices.} 
    \label{fig12} 
  \end{figure}
  \begin{figure}[!h] 
    \centering 
    \includegraphics[width=0.58\textwidth]{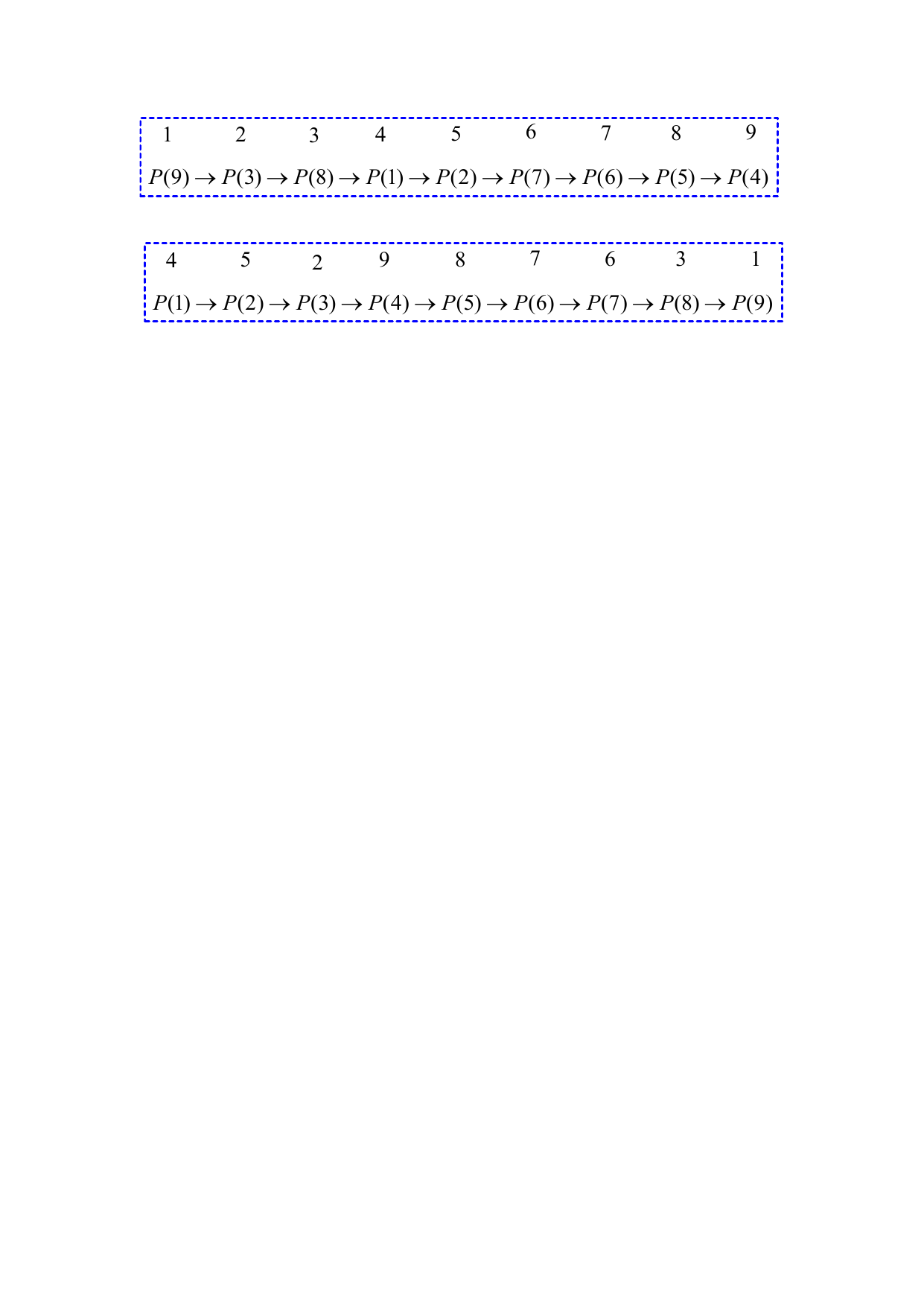} 
    \caption{Result after reordering by plaintext serial indices.} 
    \label{fig13}
  \end{figure}
  (3) 
  Basic method for further recovering the mid-permutation $f_{S1}^{-1}$ and the post-permutation $f_{S2}^{-1}$ via the chain attack

As analyzed with Fig.~\ref{fig10}, once $f_{S0}^{-1}$ is recovered, $P_{1}$ becomes known to the attacker. The attacker then recovers the subsequent four-stage PNDCC (diffusion-permutation-diffusion-permutation). While attacking the mid-permutation, one needs to reduce the number of ciphertext elements in the sets from 4 to 2 by taking unions over plaintext and intersections over ciphertext. The corresponding iterative equations and the result after applying plaintext unions and ciphertext intersections are
\begin{equation}
  \label{eq28}
\begin{aligned}
  & \left\{ \begin{aligned}
  & {{P}_{1}}(k)=f_{D1}^{-1}\left( \begin{aligned}
  & f_{D2}^{-1}\left( C\left( f_{S2}^{-1}\left( f_{S1}^{-1}\left( k \right) \right) \right), C\left( f_{S2}^{-1}\left( f_{S1}^{-1}\left( k \right)-1 \right) \right), {{K}_{2}}\left( f_{S1}^{-1}\left( k \right) \right)   
  \right), \\ 
  & f_{D2}^{-1}\left( C\left( f_{S2}^{-1}\left( f_{S1}^{-1}\left( k-1 \right) \right) \right), C\left( f_{S2}^{-1}\left( f_{S1}^{-1}\left( k-1 \right)-1 \right) \right), {{K}_{2}}\left( f_{S1}^{-1}\left( k-1 \right) \right)   
  \right),\\ 
  &{{K}_{1}}\left( k \right) \\ 
\end{aligned} \right) \\ 
 & {{P}_{1}}(k+1)=f_{D1}^{-1}\left( \begin{aligned}
  & f_{D2}^{-1}\left( C\left( f_{S2}^{-1}\left( f_{S1}^{-1}\left( k+1 \right) \right) \right), C\left( f_{S2}^{-1}\left( f_{S1}^{-1}\left( k+1 \right)-1 \right) \right), {{K}_{2}}\left( f_{S1}^{-1}\left( k+1 \right) \right)   
  \right), \\ 
  & f_{D2}^{-1}\left( C\left( f_{S2}^{-1}\left( f_{S1}^{-1}\left( k \right) \right) \right), C\left( f_{S2}^{-1}\left( f_{S1}^{-1}\left( k \right)-1 \right) \right), {{K}_{2}}\left( f_{S1}^{-1}\left( k \right) \right)   
  \right),\\ 
  &{{K}_{1}}\left( k+1 \right) \\ 
\end{aligned} \right) \\ 
\end{aligned} \right. \\ 
 & \to \left\{ \begin{aligned}
  & \left\{ {{P}_{1}}(k) \right\}\leftarrow \left\{ \begin{aligned}
  & C\left( f_{S2}^{-1}\left( f_{S1}^{-1}\left( k \right) \right) \right),C\left( f_{S2}^{-1}\left( f_{S1}^{-1}\left( k \right)-1 \right) \right), \\ 
 & C\left( f_{S2}^{-1}\left( f_{S1}^{-1}\left( k-1 \right) \right) \right),C\left( f_{S2}^{-1}\left( f_{S1}^{-1}\left( k-1 \right)-1 \right) \right) \\ 
\end{aligned} \right\} \\ 
 & \left\{ {{P}_{1}}(k+1) \right\}\leftarrow \left\{ \begin{aligned}
  & C\left( f_{S2}^{-1}\left( f_{S1}^{-1}\left( k+1 \right) \right) \right),C\left( f_{S2}^{-1}\left( f_{S1}^{-1}\left( k+1 \right)-1 \right) \right), \\ 
 & C\left( f_{S2}^{-1}\left( f_{S1}^{-1}\left( k \right) \right) \right),C\left( f_{S2}^{-1}\left( f_{S1}^{-1}\left( k \right)-1 \right) \right) \\ 
\end{aligned} \right\} \\ 
\end{aligned} \right. \\ 
 & \xrightarrow[\text{Ciphertext: intersection}]{\text{Plaintext: union}} \left\{ {{P}_{1}}(k),{{P}_{1}}(k+1) \right\}\leftarrow \left\{
 C\left( f_{S2}^{-1}\left( f_{S1}^{-1}\left( k \right) \right) \right), C\left( f_{S2}^{-1}\left( f_{S1}^{-1}\left( k \right)-1 \right) \right)   
\right\} \\ 
 & \to \left\{ {{P}_{1}}\left( f_{S1}^{{}}\left( k \right) \right),{{P}_{1}}\left( f_{S1}^{{}}\left( k \right)+1 \right) \right\}\leftarrow \left\{C\left( f_{S2}^{-1}\left( k \right) \right), C\left( f_{S2}^{-1}\left( k-1 \right) \right)   
\right\} \\ 
\end{aligned}.
\end{equation}

The known mid-permutation $f_{S1}^{-1},f_{S1}^{{}}$ and post-permutation $f_{S2}^{-1}$ in Eq.~\eqref{eq28} are
\begin{equation}
  \label{eq29}
  \left\{ \begin{aligned}
  & f_{S1}^{-1}:(1,9),(2,2),(3,4),(4,5),(5,1),(6,6),(7,3),(8,8),(9,7) \\ 
 & f_{S1}:(1,5),(2,2),(3,7),(4,3),(5,4),(6,6),(7,9),(8,8),(9,1) \\ 
 & f_{S2}^{-1}:(1,3),(2,6),(3,7),(4,8),(5,2),(6,1),(7,9),(8,4),(9,5) \\ 
\end{aligned} \right..
\end{equation}

From the decryption machine in Fig.~\ref{fig10}, the chosen-ciphertext attack, and Tab.~\ref{tab5}, we obtain
\begin{equation}
  \label{eq30}
\left\{ \begin{aligned}
  & \left\{ C(1) \right\}\to \left\{ {{P}_{1}}(6),{{P}_{1}}(7),{{P}_{1}}(9) \right\} \\ 
 & \left\{ C(2) \right\}\to \left\{ {{P}_{1}}(4),{{P}_{1}}(5),{{P}_{1}}(6),{{P}_{1}}(7) \right\} \\ 
 & \left\{ C(3) \right\}\to \left\{ {{P}_{1}}(2),{{P}_{1}}(3),{{P}_{1}}(5),{{P}_{1}}(6) \right\} \\ 
 & \left\{ C(4) \right\}\to \left\{ {{P}_{1}}(1),{{P}_{1}}(2),{{P}_{1}}(8),{{P}_{1}}(9) \right\} \\ 
 & \left\{ C(5) \right\}\to \left\{ {{P}_{1}}(1),{{P}_{1}}(2) \right\} \\ 
 & \left\{ C(6) \right\}\to \left\{ {{P}_{1}}(2),{{P}_{1}}(3),{{P}_{1}}(7),{{P}_{1}}(8) \right\} \\ 
 & \left\{ C(7) \right\}\to \left\{ {{P}_{1}}(3),{{P}_{1}}(4),{{P}_{1}}(7),{{P}_{1}}(8) \right\} \\ 
 & \left\{ C(8) \right\}\to \left\{ {{P}_{1}}(3),{{P}_{1}}(4),{{P}_{1}}(5) \right\} \\ 
 & \left\{ C(9) \right\}\to \left\{ {{P}_{1}}(8),{{P}_{1}}(9) \right\} \\ 
\end{aligned} \right.\xrightarrow{\text{Table 5}}\left\{ \begin{aligned}
  & \left\{ {{P}_{1}}(1) \right\}\leftarrow \left\{ C(4),C(5) \right\} \\ 
 & \left\{ {{P}_{1}}(2) \right\}\leftarrow \left\{ C(3),C(4),C(5),C(6) \right\} \\ 
 & \left\{ {{P}_{1}}(3) \right\}\leftarrow \left\{ C(3),C(6),C(7),C(8) \right\} \\ 
 & \left\{ {{P}_{1}}(4) \right\}\leftarrow \left\{ C(2),C(7),C(8) \right\} \\ 
 & \left\{ {{P}_{1}}(5) \right\}\leftarrow \left\{ C(2),C(3),C(8) \right\} \\ 
 & \left\{ {{P}_{1}}(6) \right\}\leftarrow \left\{ C(1),C(2),C(3) \right\} \\ 
 & \left\{ {{P}_{1}}(7) \right\}\leftarrow \left\{ C(1),C(2),C(6),C(7) \right\} \\ 
 & \left\{ {{P}_{1}}(8) \right\}\leftarrow \left\{ C(4),C(6),C(7),C(9) \right\} \\ 
 & \left\{ {{P}_{1}}(9) \right\}\leftarrow \left\{ C(1),C(4),C(9) \right\} \\ 
\end{aligned} \right.
\end{equation}

\begin{table}[!h]
  \centering
  \caption{Converting the mapping from ciphertext sets to plaintext sets in Eq.~\eqref{eq30} into a mapping from plaintext sets to ciphertext sets}
  \label{tab5}
\begin{tabular}{cccccccccc}
  \hline
        & $C(1)$ & $C(2)$ & $C(3)$ & $C(4)$ & $C(5)$ & $C(6)$ & $C(7)$ & $C(8)$ & $C(9) $\\ \hline
  $P1(1)$ &      &      &      & $\checkmark$    & $\checkmark$    &      &      &      &      \\ 
  $P1(2)$ &      &      & $\checkmark$    & $\checkmark$    & $\checkmark$    & $\checkmark$    &      &      &      \\ 
  $P1(3)$ &      &      & $\checkmark$    &      &      & $\checkmark$    & $\checkmark$    & $\checkmark$    &      \\ 
  $P1(4)$ &      & $\checkmark$    &      &      &      &      & $\checkmark$    & $\checkmark$    &      \\ 
  $P1(5)$ &      & $\checkmark$    & $\checkmark$    &      &      &      &      & $\checkmark$    &      \\ 
  $P1(6)$ & $\checkmark$    & $\checkmark$    & $\checkmark$    &      &      &      &      &      &      \\ 
  $P1(7)$ & $\checkmark$    & $\checkmark$    &      &      &      & $\checkmark$    & $\checkmark$    &      &      \\ 
  $P1(8)$ &      &      &      & $\checkmark$    &      & $\checkmark$    & $\checkmark$    &      & $\checkmark$    \\ 
  $P1(9)$ & $\checkmark$    &      &      & $\checkmark$    &      &      &      &      & $\checkmark$    \\ \hline
  \end{tabular}
  \end{table}

  According to the structure of Eq.~\eqref{eq28}, we need to take intersections over adjacent ciphertext sets and unions over adjacent plaintext sets in Eq.~\eqref{eq30}. Then, using the six linking rules, we obtain the result shown in Fig.~\ref{fig14}. Comparing with $k$, we see that both the mid-permutation $f_{S1}$ and the post-permutation $f_{S2}^{-1}$ in Eq.~\eqref{eq29} are simultaneously recovered.
  \begin{figure}[htbp] 
    \centering 
    \includegraphics[width=0.68\textwidth]{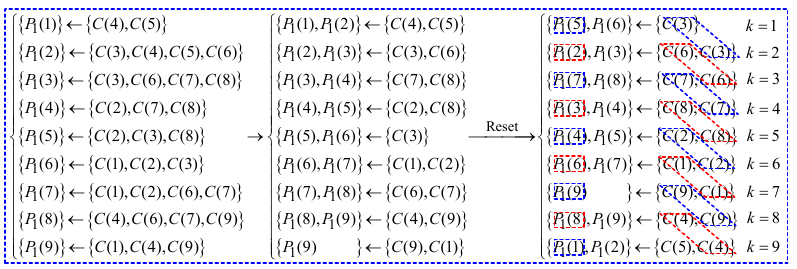} 
    \caption{Illustration of recovering the mid-permutation $f_{S1}$ and the post-permutation $f_{S2}^{-1}$.} 
    \label{fig14} 
  \end{figure}
  
  It should be emphasized that although this section presents a five-stage PNDCC with only 9 pixels and the permutation keys specified by Eq.~\eqref{eq21}, the basic chain attack method proposed here applies equally to multi-stage PNDCC with arbitrary $M\times N$ pixels and arbitrary permutation keys. In the general case, the chain attack flowchart is shown in Fig.~\ref{fig15}.

  \begin{figure}[htbp] 
    \centering 
    \includegraphics[width=0.57\textwidth]{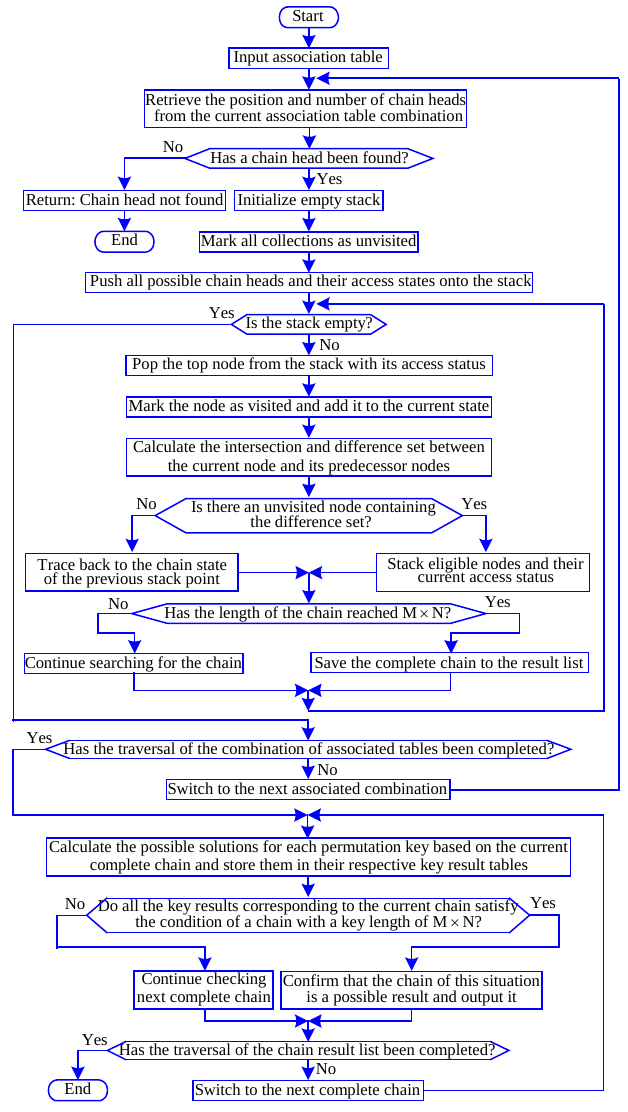}
    \caption{Flowchart of the chain attack in the general case.} 
    \label{fig15} 
  \end{figure}

\section{Recovering Diffusion Keys After All Permutations Are Broken}
\label{sec:diffusion-attack}

We use the five-stage PNDCC in Eq.~\eqref{eq14} as a representative example to explain how to recover the diffusion keys. The method generalizes to arbitrary multi-stage settings. Set ${{m}_{1}}={{m}_{2}}=1$; the resulting iteration is
\begin{equation}
  \label{eq31}
P\left( f_{S0}^{{}}\left( k \right) \right)=f_{D1}^{-1}\left( \begin{aligned}
  & f_{D2}^{-1}\left( C\left( f_{S2}^{-1}\left( f_{S1}^{-1}\left( k \right) \right) \right),C\left( f_{S2}^{-1}\left( f_{S1}^{-1}\left( k \right)-1 \right) \right),{{K}_{2}}\left( f_{S1}^{-1}\left( k \right) \right) 
 \right), \\ 
 & f_{D2}^{-1}\left( C\left( f_{S2}^{-1}\left( f_{S1}^{-1}\left( k-1 \right) \right) \right),C\left( f_{S2}^{-1}\left( f_{S1}^{-1}\left( k-1 \right)-1 \right) \right),{{K}_{2}}\left( f_{S1}^{-1}\left( k-1 \right) \right) 
\right), {{K}_{1}}\left( k \right) 
\end{aligned} \right),
\end{equation}
where $k=1,2,\cdots ,M\times N$ and $C(0),C(-1),{{K}_{1}}(0),{{K}_{2}}(0)$ are initial conditions; for convenience, all set to 0. We now show how to recover all diffusion keys once the permutation keys have been fully recovered.

Notice that in Eq.~\eqref{eq31} there are three unknowns to be solved at each position, namely ${{K}_{2}}(f_{S1}^{-1}(k))$, ${{K}_{2}}(f_{S1}^{-1}(k-1))$ and ${{K}_{1}}(k)$ for $k=1,2,\ldots,M\times N$. Under a chosen-ciphertext attack, choose $C^{(0)}=0$ to obtain the corresponding $P^{(0)}$. Choose $C^{(V_{1})}=V_{1}$ with $V_{1}\in\{1,2,\ldots,255\}$ to obtain the corresponding $P^{(V_{1})}$. Similarly, choose $C^{(V_{2})}=V_{2}$ with $V_{2}\in\{1,2,\ldots,255\}$ and $V_{2}\neq V_{1}$ to obtain the corresponding $P^{(V_{2})}$. From Eq.~\eqref{eq31}, we have
\begin{equation}
  \label{eq32}
  \left\{ \begin{aligned}
  & P^{(0)}\left( f_{S0}^{{}}\!\left( k \right) \right)\!=\!f_{D1}^{-1}\!\left( \begin{aligned}
  & f_{D2}^{-1}\!\left(C^{(0)}\!\left( f_{S2}^{-1}\!\left( f_{S1}^{-1}\!\left( k \right) \right) \right),C^{(0)}\!\left( f_{S2}^{-1}\!\left( f_{S1}^{-1}\!\left( k \right)\!-\!1 \right) \right),{{K}_{2}}\!\left( f_{S1}^{-1}\!\left( k \right) \right) 
\right), \\ 
 & f_{D2}^{-1}\!\left( C^{(0)}\!\left( f_{S2}^{-1}\!\left( f_{S1}^{-1}\!\left( k\!-\!1 \right) \right) \right),C^{(0)}\!\left( f_{S2}^{-1}\!\left( f_{S1}^{-1}\!\left( k\!-\!1 \right)\!-\!1 \right) \right),{{K}_{2}}\!\left( f_{S1}^{-1}\!\left( k\!-\!1 \right) \right)
 \right),{{K}_{1}}\!\left( k \right) \\ 
\end{aligned} \right) \\ 
 & P^{({{V}_{1}})}\!\left( f_{S0}^{{}}\!\left( k \right) \right)\!=\!f_{D1}^{-1}\!\left( \begin{aligned}
  & f_{D2}^{-1}\!\left( C^{({{V}_{1}})}\!\left( f_{S2}^{-1}\!\left( f_{S1}^{-1}\!\left( k \right) \right) \right),C^{({{V}_{1}})}\!\left( f_{S2}^{-1}\!\left( f_{S1}^{-1}\!\left( k \right)\!-\!1 \right) \right),{{K}_{2}}\!\left( f_{S1}^{-1}\!\left( k \right) \right) 
\right), \\ 
 & f_{D2}^{-1}\!\left( C^{({{V}_{1}})}\!\left( f_{S2}^{-1}\!\left( f_{S1}^{-1}\!\left( k\!-\!1 \right) \right) \right),C^{({{V}_{1}})}\!\left( f_{S2}^{-1}\!\left( f_{S1}^{-1}\!\left( k\!-\!1 \right)\!-\!1 \right) \right),{{K}_{2}}\!\left( f_{S1}^{-1}\!\left( k\!-\!1 \right) \right) 
 \right),{{K}_{1}}\!\left( k \right) \\ 
\end{aligned} \right) \\ 
 & P^{({{V}_{2}})}\!\left( f_{S0}^{{}}\!\left( k \right) \right)\!=\!f_{D1}^{-1}\!\left( \begin{aligned}
  & f_{D2}^{-1}\!\left( C^{({{V}_{2}})}\!\left( f_{S2}^{-1}\!\left( f_{S1}^{-1}\!\left( k \right) \right) \right),C^{({{V}_{2}})}\!\left( f_{S2}^{-1}\!\left( f_{S1}^{-1}\!\left( k \right)\!-\!1 \right) \right),{{K}_{2}}\!\left( f_{S1}^{-1}\!\left( k \right) \right)  
\right), \\ 
 & f_{D2}^{-1}\!\left( C^{({{V}_{2}})}\!\left( f_{S2}^{-1}\!\left( f_{S1}^{-1}\!\left( k\!-\!1 \right) \right) \right),C^{({{V}_{2}})}\!\left( f_{S2}^{-1}\!\left( f_{S1}^{-1}\!\left( k\!-\!1 \right)\!-\!1 \right) \right),{{K}_{2}}\!\left( f_{S1}^{-1}\!\left( k\!-\!1 \right) \right) 
\right),{{K}_{1}}\!\left( k \right) \\ 
\end{aligned} \right) \\ 
\end{aligned} \right.  .
\end{equation}
where $C^{(0)}=0$, $P^{(0)}$, $C^{(V_{1})}=V_{1}$ with $V_{1}\in\{1,2,\ldots,255\}$, $P^{(V_{1})}$, $C^{(V_{2})}=V_{2}$ with $V_{2}\in\{1,2,\ldots,255\}$ and $V_{2}\neq V_{1}$, and $P^{(V_{2})}$ are all known quantities. Since the permutation keys $f_{S0}$, $f_{S1}^{-1}$, and $f_{S2}^{-1}$ have already been recovered, all the following terms in the above system are known to the attacker:
\begin{equation}
  \label{eq33}
\left\{ \!\begin{aligned}
  & P^{(0)}\!\left(\! f_{S0}^{{}}\left( k \right) \!\right)\!,C^{(0)}\!\left(\! f_{S2}^{-\!1}\left(\! f_{S1}^{-\!1}\left( k \right) \!\right) \!\right)\!,C^{(0)}\!\left(\! f_{S2}^{-\!1}\left(\! f_{S1}^{-\!1}\left( k \right)\!-\!1 \right) \!\right)\!,C^{(0)}\!\left(\! f_{S2}^{-\!1}\left(\! f_{S1}^{-\!1}\left( k\!-\!1 \right) \!\right) \!\right)\!,C^{(0)}\!\left(\! f_{S2}^{-\!1}\left(\! f_{S1}^{-\!1}\left( k\!-\!1 \right)\!-\!1 \right) \!\right) \\ 
 & P^{({{V}_{1}})}\!\left(\! f_{S0}^{{}}\!\left(\! k \right) \!\right)\!,C^{({{V}_{1}})}\!\left(\! f_{S2}^{-\!1}\!\left(\! f_{S1}^{-\!1}\!\left( k \right) \!\right) \!\right)\!,C^{({{V}_{1}})}\!\left(\! f_{S2}^{-\!1}\!\left(\! f_{S1}^{-\!1}\!\left( k \right)\!-\!1 \right) \!\right)\!,C^{({{V}_{1}})}\!\left(\! f_{S2}^{-\!1}\!\left(\! f_{S1}^{-\!1}\!\left( k\!-\!1 \right) \!\right) \!\right)\!,C^{({{V}_{1}})}\!\left(\! f_{S2}^{-\!1}\!\left(\! f_{S1}^{-\!1}\!\left( k\!-\!1 \right)\!-\!1 \right) \!\right) \\ 
 & P^{({{V}_{2}})}\!\left(\! f_{S0}^{{}}\!\left( k \right) \!\right)\!,C^{({{V}_{2}})}\!\left(\! f_{S2}^{-\!1}\!\left(\! f_{S1}^{-\!1}\!\left( k \right) \!\right) \!\right)\!,C^{({{V}_{2}})}\!\left(\! f_{S2}^{-\!1}\!\left(\! f_{S1}^{-\!1}\!\left( k \right)\!-\!1 \right) \!\right)\!,C^{({{V}_{2}})}\!\left(\! f_{S2}^{-\!1}\!\left(\! f_{S1}^{-\!1}\!\left( k\!-\!1 \right) \!\right) \!\right)\!,C^{({{V}_{2}})}\!\left(\! f_{S2}^{-\!1}\!\left(\! f_{S1}^{-\!1}\!\left( k\!-\!1 \right)\!-\!1 \right) \!\right) \\ 
\end{aligned} \right.\!.
\end{equation}
Therefore, the diffusion keys can be solved from Eq.~\eqref{eq32}.

Setting $k=1$ in Eq.~\eqref{eq32} yields
\begin{equation}
  \label{eq34}
\left\{ \begin{aligned}
  & P^{(0)}\left( f_{S0}^{{}}\left( 1 \right) \right)=f_{D1}^{-1}\left( \begin{aligned}
  & f_{D2}^{-1}\left( C^{(0)}\left( f_{S2}^{-1}\left( f_{S1}^{-1}\left( 1 \right) \right) \right),C^{(0)}\left( f_{S2}^{-1}\left( f_{S1}^{-1}\left( 1 \right)-1 \right) \right),{{K}_{2}}\left( f_{S1}^{-1}\left( 1 \right) \right) 
 \right), \\ 
 & f_{D2}^{-1}\left(C^{(0)}\left( 0 \right),C^{(0)}\left( -1 \right),{{K}_{2}}\left( 0 \right) 
\right),{{K}_{1}}\left( 1 \right) \\ 
\end{aligned} \right) \\ 
 & P^{({{V}_{1}})}\left( f_{S0}^{{}}\left( 1 \right) \right)=f_{D1}^{-1}\left( \begin{aligned}
  & f_{D2}^{-1}\left( C^{({{V}_{1}})}\left( f_{S2}^{-1}\left( f_{S1}^{-1}\left( 1 \right) \right) \right),C^{({{V}_{1}})}\left( f_{S2}^{-1}\left( f_{S1}^{-1}\left( 1 \right)-1 \right) \right),{{K}_{2}}\left( f_{S1}^{-1}\left( 1 \right) \right) 
\right), \\ 
 & f_{D2}^{-1}\left( C^{({{V}_{1}})}\left( 0 \right),C^{({{V}_{1}})}\left( -1 \right),{{K}_{2}}\left( 0 \right) 
 \right),{{K}_{1}}\left( 1 \right) \\ 
\end{aligned} \right) \\ 
 & P^{({{V}_{2}})}\left( f_{S0}^{{}}\left( 1 \right) \right)=f_{D1}^{-1}\left( \begin{aligned}
  & f_{D2}^{-1}\left( C^{({{V}_{2}})}\left( f_{S2}^{-1}\left( f_{S1}^{-1}\left( 1 \right) \right) \right),C^{({{V}_{2}})}\left( f_{S2}^{-1}\left( f_{S1}^{-1}\left( 1 \right)-1 \right) \right),{{K}_{2}}\left( f_{S1}^{-1}\left( 1 \right) \right) 
\right), \\ 
 & f_{D2}^{-1}\left( C^{({{V}_{2}})}\left( 0 \right),C^{({{V}_{2}})}\left( -1 \right),{{K}_{2}}\left( 0 \right) 
 \right),{{K}_{1}}\left( 1 \right) \\ 
\end{aligned} \right) \\ 
\end{aligned} \right..
\end{equation}

Solving the system gives ${{K}_{2}}(f_{S1}^{-1}(1))$, $K_{2}(0)$, and $K_{1}(1)$. Setting $k=2$ in Eq.~\eqref{eq32} yields
\begin{equation}
  \label{eq35}
\left\{ \begin{aligned}
  & P^{(0)}\left( f_{S0}^{{}}\left( 2 \right) \right)=f_{D1}^{-1}\left( \begin{aligned}
  & f_{D2}^{-1}\left( C^{(0)}\left( f_{S2}^{-1}\left( f_{S1}^{-1}\left( 2 \right) \right) \right),C^{(0)}\left( f_{S2}^{-1}\left( f_{S1}^{-1}\left( 2 \right)-1 \right) \right),{{K}_{2}}\left( f_{S1}^{-1}\left( 2 \right) \right) 
\right), \\ 
 & f_{D2}^{-1}\left( C^{(0)}\left( f_{S2}^{-1}\left( f_{S1}^{-1}\left( 1 \right) \right) \right),C^{(0)}\left( f_{S2}^{-1}\left( f_{S1}^{-1}\left( 1 \right)-1 \right) \right),{{K}_{2}}\left( f_{S1}^{-1}\left( 1 \right) \right) 
\right),{{K}_{1}}\left( 2 \right) \\ 
\end{aligned} \right) \\ 
 & P^{({{V}_{1}})}\left( f_{S0}^{{}}\left( 2 \right) \right)=f_{D1}^{-1}\left( \begin{aligned}
  & f_{D2}^{-1}\left(  C^{({{V}_{1}})}\left( f_{S2}^{-1}\left( f_{S1}^{-1}\left( 2 \right) \right) \right),C^{({{V}_{1}})}\left( f_{S2}^{-1}\left( f_{S1}^{-1}\left( 2 \right)-1 \right) \right),{{K}_{2}}\left( f_{S1}^{-1}\left( 2 \right) \right) 
 \right), \\ 
 & f_{D2}^{-1}\left(C^{({{V}_{1}})}\left( f_{S2}^{-1}\left( f_{S1}^{-1}\left( 1 \right) \right) \right),C^{({{V}_{1}})}\left( f_{S2}^{-1}\left( f_{S1}^{-1}\left( 1 \right)-1 \right) \right),{{K}_{2}}\left( f_{S1}^{-1}\left( 1 \right) \right) 
\right),{{K}_{1}}\left( 2 \right) \\ 
\end{aligned} \right) \\ 
 & P^{({{V}_{2}})}\left( f_{S0}^{{}}\left( 2 \right) \right)=f_{D1}^{-1}\left( \begin{aligned}
  & f_{D2}^{-1}\left(  C^{({{V}_{2}})}\left( f_{S2}^{-1}\left( f_{S1}^{-1}\left( 2 \right) \right) \right),C^{({{V}_{2}})}\left( f_{S2}^{-1}\left( f_{S1}^{-1}\left( 2 \right)-1 \right) \right),{{K}_{2}}\left( f_{S1}^{-1}\left( 2 \right) \right) 
\right), \\ 
 & f_{D2}^{-1}\left( C^{({{V}_{2}})}\left( f_{S2}^{-1}\left( f_{S1}^{-1}\left( 1 \right) \right) \right),C^{({{V}_{2}})}\left( f_{S2}^{-1}\left( f_{S1}^{-1}\left( 1 \right)-1 \right) \right),{{K}_{2}}\left( f_{S1}^{-1}\left( 1 \right) \right)\right),{{K}_{1}}\left( 2 \right) \\ 
\end{aligned} \right) \\ 
\end{aligned} \right..
\end{equation}
Solving the system gives ${{K}_{2}}(f_{S1}^{-1}(2))$, ${{K}_{2}}(f_{S1}^{-1}(1))$, and ${{K}_{1}}(2)$.

Proceeding similarly for $k=3,4,\ldots,M\times N$, we obtain ${{K}_{2}}(f_{S1}^{-1}(k))$, ${{K}_{2}}(f_{S1}^{-1}(k-1))$, and $K_{1}(k)$ $(k=3,\allowbreak 4,\cdots ,M\times N)$. Because the diffusion is one-way and there is no mutual coupling among keys, one can independently solve for ${{K}_{2}}(f_{S1}^{-1}(1))$, ${{K}_{2}}(0)$, ${{K}_{1}}(1)$, ${{K}_{2}}(f_{S1}^{-1}(2))$, ${{K}_{2}}(f_{S1}^{-1}(1))$, ${{K}_{1}}(2)$, ${{K}_{2}}(f_{S1}^{-1}(k))$, ${{K}_{2}}(f_{S1}^{-1}(k-1))$, ${{K}_{1}}(k)$ $(k=3,4,\cdots ,M\times N)$, and so on. Moreover, the complexity of recovering the permutations via the chain attack is $M \times N + 1$ plaintext–ciphertext pairs. Since the diffusion keys can be directly recovered by reusing these pairs, no additional plaintext–ciphertext pairs are required. Therefore, the overall attack complexity for breaking a  multi-stage PNDCC remains $M \times N + 1$ plaintext–ciphertext pairs.

\section{An Example of the Chain Attack and an Improved Scheme PDCC for PNDCC}
\label{sec:attack-example-pdcc}
\subsection{Simulation experiment for breaking a four\allowbreak-stage diffusion\allowbreak-permutation\allowbreak-diffusion\allowbreak-permutation PNDCC}


According to the general chain attack flowchart shown in Fig.~\ref{fig15}, we take four-stage diffusion\allowbreak-permutation\allowbreak-diffusion\allowbreak-permutation  PNDCC as a typical example. 
The pre-diffusion and post-diffusion equations, the pre-diffusion and post-diffusion keys, the pre-permutation and post-permutation keys, and the simulation results are provided in supplementary material.


\subsection{An improved  PDCC scheme for PNDCC}

Based on Fig.~\ref{fig7} and Eqs.~\eqref{eq9}-\eqref{eq10}, we introduce plaintext delay terms to obtain
\begin{equation}
  \label{eq36}
\left\{ \begin{aligned}
  & {{P}_{1}}\left( k \right)\!=\!f_{D1}^{-1}\left( {{P}_{1}}\left( k-1 \right)\!,{{P}_{1}}\left( k-2 \right)\!,\cdots \!,{{P}_{1}}\left( k-{{n}_{1}} \right)\!,{{C}_{1}}\left( k \right)\!,{{C}_{1}}\left( k-1 \right)\!,{{C}_{1}}\left( k-2 \right)\!,\cdots \!,{{C}_{1}}\left( k-{{m}_{1}} \right)\!,{{K}_{1}}(k) 
 \right) \\ 
 & {{P}_{2}}\left( k \right)\!=\!f_{D2}^{-1}\left( {{P}_{2}}\left( k-1 \right)\!,{{P}_{2}}\left( k-2 \right)\!,\cdots \!,{{P}_{2}}\left( k-{{n}_{2}} \right)\!,{{C}_{2}}\left( k \right)\!,{{C}_{2}}\left( k-1 \right)\!,{{C}_{2}}\left( k-2 \right)\!,\cdots \!,{{C}_{2}}\left( k-{{m}_{2}} \right)\!,{{K}_{2}}(k) 
 \right) \\ 
\end{aligned} \right..
\end{equation}
where $n_{1},n_{2},m_{1},m_{2}$ are positive integers, $k=1,2,\ldots,M\times N$, and $f_{D1}^{-1}$ and $f_{D2}^{-1}$ are heterogeneous composite iterative functions~\cite{ref30}. We call a chaotic cipher whose diffusion equations in Fig.~\ref{fig7} satisfy Eq.~\eqref{eq36} a PDCC. Because PDCC includes plaintext delay terms, one cannot derive a form similar to Eq.~\eqref{eq14} from Fig.~\ref{fig7} and Eq.~\eqref{eq36}. As a result, the chain attack and codebook attacks are no longer applicable, and the scheme can resist attacks based on various cryptanalytic strategies.

\section{Conclusion}
\label{sec:conclusion}

This work proposes two novel cryptanalysis methods, ISBDA and the chain attack, successfully breaks a class of PNDCC, and presents an improved PDCC scheme that can resist algorithm-based cryptanalytic attacks. These results provide valuable guidance for the analysis and design of chaotic ciphers.

\section*{Conflicts of Interest}
The authors report no conflicts of interest. The authors alone are responsible for the content and writing of this paper.

\end{document}